\newtheorem{theorem}{Theorem}
\newtheorem{proposition}[theorem]{Proposition}
\theoremstyle{definition}
\newtheorem{definition}[theorem]{Definition}
\theoremstyle{remark}
\newtheorem{remark}[theorem]{Remark}
\newlength{\ketketwidth}
\newlength{\ketwidth}
\newcommand{\kettstylesep}[3]{
\settowidth{\ketwidth}{$#2\left|#1\right\rangle$}
\settowidth{\ketketwidth}{$#2\left.\left|#1\right\rangle\right\rangle$}
\left|#1\right\rangle#3\hspace{\ketwidth}\hspace{-\ketketwidth}
}
\newcommand{\kett}[1]{
\left.\mathchoice
{\kettstylesep{#1}{\displaystyle}{\hspace{0.3em}}}
{\kettstylesep{#1}{\textstyle}{\hspace{0.3em}}}
{\kettstylesep{#1}{\scriptstyle}{\hspace{0.3em}}}
{\kettstylesep{#1}{\scriptscriptstyle}{\hspace{0.25em}}}
\right\rangle
}
\begin{document}
\title{Quantum mechanics based on real numbers: A consistent description}
\author{Pedro Barrios Hita}
\affiliation{Institute of Software Technology, German Aerospace Center (DLR), Sankt Augustin, Germany}
\affiliation{Heinrich Heine University D\"{u}sseldorf, Faculty of Mathematics and Natural Sciences, Institute for Theoretical Physics III}
\email{Pedro.Barrios@dlr.de}
\author{Anton Trushechkin}
\affiliation{Heinrich Heine University D\"{u}sseldorf, Faculty of Mathematics and Natural Sciences, Institute for Theoretical Physics III}
\author{Hermann Kampermann}
\affiliation{Heinrich Heine University D\"{u}sseldorf, Faculty of Mathematics and Natural Sciences, Institute for Theoretical Physics III}
\author{Michael Epping}
\affiliation{Institute of Software Technology, German Aerospace Center (DLR), Sankt Augustin, Germany}
\author{Dagmar Bruß}
\affiliation{Heinrich Heine University D\"{u}sseldorf, Faculty of Mathematics and Natural Sciences, Institute for Theoretical Physics III}
\date{\today}

\begin{abstract} 
Complex numbers play a crucial role in quantum mechanics. However, their necessity remains debated: whether they are fundamental or merely convenient. 
Recently, it was shown that any real-number quantum theory satisfying
certain postulates can be falsified with multipartite experiments. In this article we show that a physically motivated postulate about composite quantum systems allows to construct quantum mechanics based on real numbers that reproduces predictions for all multipartite quantum experiments. Thus, we
argue that real-valued quantum mechanics cannot be falsified, and therefore the use of
complex numbers is a matter of convenience. \newline

\end{abstract}
\maketitle

\begin{figure}[t]
\centering
\includegraphics[scale=.7]{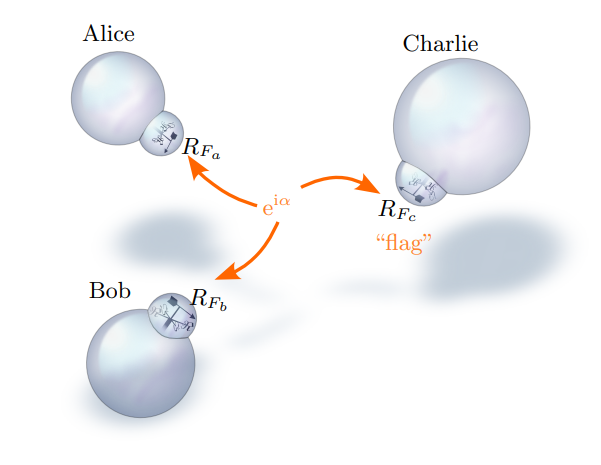}
\caption{A tripartite quantum system. The small spheres represent local two-dimensional flags, which are attached to each party to emulate the complex structure. There, a global phase factor $e^{i\alpha}$ can be arbitrarily split between the subsystems. The usual tensor product of real Hilbert spaces does not induce this ambiguity. However, the introduction of an equivalence relation on the multipartite flag state implements the same ambiguity: A global phase $\mathrm{e}^{i\alpha}$ can be arbitrarily split between the local flag rotations $R_{F_a}$, $R_{F_b}$, and $R_{F_c}$. This formulation of quantum mechanics reproduces the predictions for all multipartite quantum experiments using only real numbers.}
\label{fig:generalscenario}
\end{figure}

Quantum mechanical predictions and measurement results are real-valued, whereas the abstract quantum mechanical formalism typically relies on using complex numbers. It is therefore of fundamental interest to explore the implications of formulating quantum mechanics without complex numbers.

In addition to this intuitive argument, there are several mathematical ones that support the real formulation of quantum mechanics. 
Operations like
time-reversal
that are antilinear in the complex formulation become
linear in the real formulation \cite{Dyson_1962}.
Additionally, quantum logic’s algebraic structures can consistently be mapped to real Hilbert spaces \cite{logicqm}. Furthermore, key properties like the ability to construct orthogonal bases and to define norms are preserved in real Hilbert spaces, suggesting that these spaces offer a consistent framework \cite{foundqm,geoqm}. 
It is speculated that any fundamental physical theory can be constructed on any number field and the fundamental physical laws are invariant under the change of the number field \cite{Volovich2010}, see, e.g., works on $p$-adic quantum mechanics \cite{VVZ,Dragovich2017}.

The topic of a theory called ``real quantum mechanics'' (RQM) has been extensively studied. Two classes of RQM can be distinguished. First, the constructions from Refs.~\cite{Stueckelberg, Myrheim, McKague2009, Koh2018, Aleksandrova2013}, which successfully predict experimental outcome statistics by including a binary real-valued system, often called a rebit in the literature \cite{Wootters2012}. These constructions are therefore equivalent to standard quantum mechanics, referred to throughout this text as ``complex quantum mechanics".
However, the rebit is ``universal'' in the sense of being accessible to all subsystems of a global composite state, even under space-like separation. This property was interpreted as non-locality of the rebit. Second, the models considered in Refs. \cite{barret,hardy,adri}, which use the tensor product postulate for composite systems. These models are mathematically inequivalent to standard quantum mechanics.

Recently, it has been shown in the context of multipartite Bell-type experiments~\cite{Renou2021,Renou2025,gisin,elliot} that the second class of formulations of quantum mechanics using real numbers,  would be incompatible with the experimental outcome statistics, thus rendering them subject to experimental falsification.
The corresponding experiments were carried out \cite{Chen_2022,Wu_2022,PhysRevLett.128.040402}, and their results were indeed inconsistent with the second class of formulations of real-number quantum mechanics. A crucial postulate in the above articles is that the Hilbert space of a composite system is given by the tensor product of the Hilbert spaces of the individual subsystems. However, this mathematical postulate, though widely adopted, is too restrictive for defining a real-vector-space formulation of standard quantum mechanics. Here we will show that postulating a more fundamental physical requirement instead, leads to a real-valued theory which is equivalent  to complex  quantum mechanics. We postulate that, given two separate subsystems, an operation acting only on one subsystem does not have a measurable effect on the other one. In other words, a local operator must  act trivially on all other subsystems. Note that the tensor product structure of complex quantum mechanics  naturally fulfills our postulate (see Section E of the Supplementary Material).

In this article, we propose a real-valued construction for composite quantum systems. Note that its form is similar to the ones from the literature \cite{Stueckelberg, Myrheim, McKague2009, Aleksandrova2013}. However, we 
introduce a mathematical structure
and show that it does fulfill the above requirement of locality. 
Our construction is based on the notion of the composite Hilbert space as a real quotient space. Thus, it reflects certain equivalences of state vectors that are automatically contained in a complex formulation of quantum mechanics.
As our construction is a one-to-one mapping of complex quantum mechanics to a real description, it is compatible with the outcomes of
all quantum experiments, including multipartite Bell-type scenarios.

We propose the following postulates for both complex  and real quantum mechanics.
(P1) \textit{For every physical system $A$, there is a complex (real) Hilbert space $\mathcal{H}_A$ in which the state of the former is represented by a normalized vector $\ket{\phi}$, i.e., $\langle\phi\vert \phi\rangle=1$.}
(P2) \textit{A measurement $\Pi$ of 
system $A$ corresponds to an ensemble of positive semi-definite operators   
$\{\Pi_r\}_r$ indexed by the measurement outcome $r$ and acting on $\mathcal{H}_A$, obeying $\sum_r\Pi_r=\mathds{1}_A$, where $\mathds{1}_A$ denotes the identity.
}
(P3) \textit{Born Rule: If $\Pi$ is measured while $A$ is in the state $\ket{\phi}$, the probability of measuring 
outcome $r$ is given by $P(r)=\bra{\phi}\Pi_r\ket{\phi}$.}
(P4) \textit{Given two subsystems $A$ and $B$, the operators used to describe measurements or transformations of system $A$, act trivially on $B$, and vice versa. In particular, this means that any two local operators $O_A$ and $O_B$, acting on subsystems A and B, respectively, commute (i.e., $[O_A, O_B]=0$)} \cite{Peres99}.
Note that postulates (P1) -- (P3) are analogous to the ones in Ref.~\cite{Renou2021}, and only (P4) is different.
Further details on the definition of postulate (P4) can be        found in Section E of the Supplementary Material \cite{supp_mat}.

We will now show how these postulates allow a description of quantum mechanics using only real numbers. We start in complex quantum mechanics with a single quantum state.
An abstract state $\ket{\psi}$ of a $d$-dimensional quantum system  can be represented as \mbox{$\ket{\psi}= (c_1, c_2,...,c_d)^T$} in a fixed basis, with $c_i\in\mathbb{C}$. It can also be represented as $\ket{\psi}= (\mathrm{Re}(c_1),...,\mathrm{Re}(c_d), \mathrm{Im}(c_1),...,\mathrm{Im}(c_d))^T$. We will denote in the following $\mathrm{Re}(\ket{\psi}):=(\mathrm{Re}(c_1),...,\mathrm{Re}(c_d))^T$ and
$\mathrm{Im}(\ket{\psi}):=(\mathrm{Im}(c_1),...,\mathrm{Im}(c_d))^T$.
A mapping from the complex to the real representation of a (pure) state $\ket{\psi}\in\mathbb{C}^d$ (see also, e.g., Refs.~\cite{Myrheim,McKague2009,Aleksandrova2013,Renou2021}) can
thus be written in the convenient form
\begin{equation}
\label{eq:map1partystates}
\mathcal{S}:\ket{\psi}\mapsto\mathrm{Re}(\ket{\psi})\otimes\ket{0}_{F}+\mathrm{Im}(\ket{\psi})\otimes\ket{1}_{F} =: \ket{\widetilde\psi}\ ,
\end{equation}
where the subscript $F$ stands for ``flag'':
the two  real degrees of freedom of the flag indicate the real and imaginary part of  the original state vector.
Here the real Hilbert space has the form $\mathcal H=\mathbb{R}^d\otimes\mathbb R^2_F$.
We point out that even though  Eq.~(\ref{eq:map1partystates}) 
formally resembles the structure  of an entangled state, $\ket{\widetilde\psi}$ is a \textit{single} quantum state and thus is not entangled. Throughout this paper we will denote the real representation of states and
operators with a tilde. The bra vector $\bra{\widetilde\psi}$ is defined as $\ket{\widetilde\psi}^T$. As $\braket{\widetilde\psi|
\widetilde\psi}= \braket{\psi|\psi}$ holds, normalisation of 
$\ket{\psi}$ ensures normalisation of $\ket{\widetilde\psi}$, see (P1).

In quantum mechanics, a normalised state vector $\ket{\psi}$ is physically indistinguishable from, and thus equivalent to, $e^{i\alpha}\ket{\psi}$, with 
$\alpha\in\mathbb{R}$; i.e., a global phase is undetectable.  A global phase translates in the real representation, see Eq.~(\ref{eq:map1partystates}), 
to the rotation $R_F(\alpha)$  by the angle $\alpha$ acting on the flag:
\begin{align}
\label{eq:map1phasestates}
\mathcal{S}\colon e^{i\alpha}\ket{\psi} &\mapsto
\text{Re}(\ket{\psi}) \otimes
(\cos{\alpha}\ket{0}_F +
\sin{\alpha}\ket{1}_F) \nonumber \\
&+ \text{Im}(\ket{\psi})  \otimes
(-\sin{\alpha}\ket{0}_F +
\cos{\alpha}\ket{1}_F) \nonumber \\
& =   R_F(\alpha) \ \mathcal{S}(\ket{\psi})
\ .
\end{align}
Thus, the global phase or U$(1)$-ambiguity in the definition of a complex quantum state vector corresponds to an SO$(2)$-ambiguity in the real formulation. The following example shows that real quantum mechanics  
has certain features that are quite different from complex quantum mechanics: For $\alpha = \pi/2$, the real representations of $i\ket{\psi}$
and $\ket{\psi}$ are orthogonal to each other, even though these states are physically indistinguishable.

The physical indistinguishability of two real state vectors $\ket{\widetilde\psi}$ and $\ket{\widetilde\phi}=R_F(\alpha)\ket{\widetilde\psi}$ is 
guaranteed iff 
the real measurement operators from (P2) lead to identical outcome probabilities, see (P3).
Let us denote the map from a complex operator to its real representation by 
$\mathcal T$, and the real image of a complex operator
$\Pi$ by $\widetilde\Pi={\mathcal T}(\Pi)$. 
We require
\begin{equation}
\braket{\widetilde\psi|
\widetilde\Pi|\widetilde\psi}
=
\braket{\widetilde\phi|
\widetilde\Pi|\widetilde\phi}
=
\braket{\widetilde\psi|
R_F(\alpha)^\dagger\,\widetilde\Pi \,R_F(\alpha)|\widetilde\psi} \ 
\end{equation}
for any real vector $\ket{\widetilde\psi}$ of the form given in Eq.~(\ref{eq:map1partystates}) and for arbitrary $\alpha\in\mathbb{R}$. 
Hence, $\widetilde\Pi$ must commute with $R_F(\alpha)$ for all $\alpha$.
Also requiring that $\mathcal S(\Pi\ket\psi)=\mathcal T(\Pi)\mathcal S(\ket\psi)$  implies that real operators have the form (for details see Section A of the Supplementary Material \cite{supp_mat})
\begin{equation}
\label{eq:RPi}
{\mathcal T}(\Pi)=\mathrm{Re(\Pi)}\otimes I_F + \mathrm{Im(\Pi)}\otimes J_F ,
\end{equation}
where
\begin{equation}
\label{eq:defJ}
I_F=\begin{pmatrix}
1&0\\0&1
\end{pmatrix}
,
\qquad
J_F=\begin{pmatrix}
0&-1\\1&0
\end{pmatrix} ,
\end{equation}
and $\mathrm{Re(\Pi)}$ ($\mathrm{Im(\Pi)}$) denotes the matrix consisting of the real (imaginary) parts of the elements of $\Pi$ in the basis used to define $\mathrm{Re}(\ket\psi)$ and $\mathrm{Im}(\ket\psi)$. 
Note that the restriction for the  observables  to the form given in Eq.~\eqref{eq:RPi} forbids nontrivial measurement of the flag subsystem only (e.g., the projective measurement $\{\mathds{1}\otimes\ket0\bra0_F, \mathds{1}\otimes\ket1\bra1_F\}$). In this sense, the flag is not a directly accessible degree of freedom.

An alternative derivation of the map $\mathcal{T}$ given in Eq.~(\ref{eq:RPi}) based on a quotient space argument is given in Section C and additional properties  of $\mathcal{T}$ are in Section D of the Supplementary Material \cite{supp_mat}. 

The real representation in Eq.~(\ref{eq:RPi})
preserves the algebraic structure, i.e., $\mathcal{T}(A_1 A_2)=\mathcal{T}(A_1)\mathcal{T}(A_2)$ for any (complex) observables $A_1$ and $A_2$, and satisfies
\begin{equation}
\mathcal{S}(\ket{\psi})^T {\mathcal{T}}(A)\mathcal{S}(\ket{\psi}) = \bra{\psi} A \ket{\psi}, 		\label{eq:expvalues}
\end{equation}
i.e., the maps $\mathcal S$ and $\mathcal T$ for vectors and operators leave all expectation values invariant.

Let us now derive the description of composite quantum systems in a real vector space. Here we analyse for simplicity the case of two two-dimensional subsystems
labeled $A$ and $B$.   The structure of the states of composite systems with more than two subsystems and arbitrary dimension is detailed in Section B of the Supplementary Material \cite{supp_mat}.

A straightforward attempt for the mapping from the complex to a real vector space would be to extend the map  $\mathcal{S}$ to the tensor product 
$\mathcal{S}^{\otimes 2}$, i.e.,
$\ket{\psi_A} \otimes \ket{\psi_B}\mapsto
\mathcal{S}(\ket{\psi_A}) \otimes  \mathcal{S}(\ket{\psi_B})$. However, it is obvious that the dimensions would not match: a vector in 
$\mathbb{C}^2_A\otimes \mathbb{C}^2_B$ has $2\cdot2\cdot2=8$ real parameters, while a vector in 
$\mathbb{R}^4_A\otimes \mathbb{R}^4_B$ has $4\cdot 4=16$ parameters. In addition, remember the following important property of quantum mechanics: Given two independent subsystems, each of them can have an undetectable phase, see above. Due to linearity in each subsystem, this can also be interpreted as a global phase which is equal to the product of the individual phases. Vice versa, a global phase can be split up between the subsystems in an arbitrary way. 
Translating this property to a real representation,
as schematically illustrated in 
Figure~\ref{fig:generalscenario}, is the aim for the remainder of this article. The fact that the phase of a tensor product can be attributed to any of the subsystems is used in some quantum algorithms and has been referred to in the literature as ``phase kickback'' \cite{kickback1,kickback2,kickback3,q_alg_rev}.

As an example for this phase ambiguity in tensor products, it holds that $\ket {0}_A\otimes\ket{1}_B = i\ket{0}_A\otimes(-i\ket{1}_B)$. However, the images of the left-hand side and the right-hand side of this equality, when applying $\mathcal{S}$ to each qubit individually, are different:  $\mathcal{S}(\ket {0}_A)\otimes\mathcal{S}(\ket {1}_B)=\ket{0}_A\otimes \ket{0}_{F_a}\otimes{\ket1}_B\otimes{\ket0}_{F_b}$,
which is not equal to
$\mathcal{S}(i\ket {0}_A)\otimes\mathcal{S}(-i\ket {1}_B)=-\ket{0}_A\otimes \ket{1}_{F_a}\otimes{\ket1}_B\otimes{\ket1}_{F_b}$. Thus, tensoring the map 
$\mathcal{S}$ does not lead to a well-defined map. This is sketched in Figure~\ref{fig:balloons}. There and in the following we will group the flag states together, as indicated by the labels $F_{a(b)}$. We will also omit the tensor product symbols between the systems $A$ and $B$, and between the flags.

\begin{figure}[tp]
\centering
\def\svgwidth{\linewidth}
\includegraphics[scale=.65]{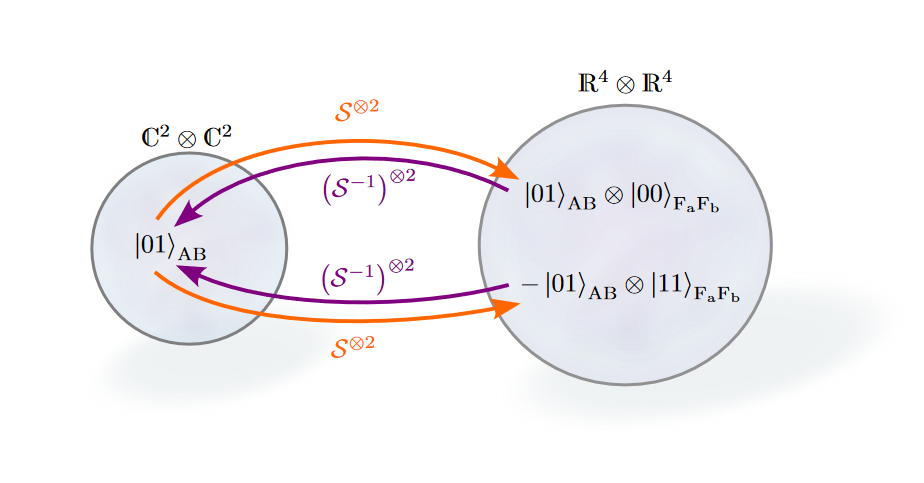}
\caption{Visualisation of the ill-defined ``map'' $\mathcal{S}^{\otimes 2}$
(orange arrows). Since $\ket{0}_A\otimes{\ket1}_B=i\ket0_A\otimes (-i)\ket1_B$, one element in the domain would have more than one image.
The tensored inverse map 
$(\mathcal{S}^{-1})^{\otimes 2} $ (purple arrows) is well-defined but not invertible: Our example shows two elements in the domain which have the same image, and thus the kernel is non-trivial.}
\label{fig:balloons}
\end{figure}

Therefore, one has to follow a different approach. Instead of extending $\mathcal{S}$ to the tensor product, we extend its inverse $\mathcal S^{-1}$, which is well-defined. The states $\ket{01}_{AB}\otimes\ket{00}_{F_aF_b}$ and $-\ket{01}_{AB}\otimes\ket{11}_{F_aF_b}$ have the same image under the map $(\mathcal S^{-1})^{\otimes 2}$
(see Figure~\ref{fig:balloons}). These vectors correspond to the same physical state and are thus equivalent.
Two vectors are equivalent if their difference belongs to the kernel ${\rm ker}[(\mathcal S^{-1})^{\otimes 2}]$ of the map $(\mathcal S^{-1})^{\otimes 2}$. It is easy to check that  $\mathrm{ker}[(\mathcal S^{-1})^{\otimes 2}]$ is spanned by the vectors of the form $\ket{\psi'}_{AB}\otimes(\ket{00}_{F_aF_b}+\ket{11}_{F_aF_b})$ and $\ket{\psi'}_{AB}\otimes(\ket{01}_{F_aF_b}-\ket{10}_{F_aF_b})$, where $\ket{\psi'}_{AB}\in\mathbb R^2\otimes\mathbb R^2$.

Consider then, instead of the tensor product 
$\mathbb{R}^4_A\otimes \mathbb{R}^4_B$,
the quotient space $\mathbb{R}^4_A\otimes \mathbb{R}^4_B/{\rm ker}[(\mathcal S^{-1})^{\otimes 2}]$, i.e., the space of the corresponding equivalence classes. 
This leads to a one-to-one map between the complex and the real representation of quantum mechanics.
Then the equivalence classes $\kett{\cdot}_{F}$ associated with the flag basis states for the composite system are

\begin{equation}
\label{eq:FlagEquivClasses}
\begin{aligned}
\kett{0}_{F}=
&\ket{\psi_{\mathrm{even}}^{(2)}}_{F_aF_b}\\
+&{\rm span}\{\ket{00}_{F_aF_b}+\ket{11}_{F_aF_b},
\:
\ket{01}_{F_aF_b}-\ket{10}_{F_aF_b}\}\\
\kett{1}_{F}=&\ket{\psi_{\mathrm{odd}}^{(2)}}_{F_aF_b}\\
+&{\rm span}\{\ket{00}_{F_aF_b}+\ket{11}_{F_aF_b},
\:
\ket{01}_{F_aF_b}-\ket{10}_{F_aF_b}\},
\end{aligned}
\end{equation}
where $\ket{\psi_{\mathrm{even}}^{(2)}}_{F_aF_b}$ and $\ket{\psi_{\mathrm{odd}}^{(2)}}_{F_aF_b}$ are the canonical representatives of these equivalence classes (i.e., vectors orthogonal to both $\ket{00}_{F_aF_b}+\ket{11}_{F_aF_b}$ and $\ket{01}_{F_aF_b}-\ket{10}_{F_aF_b}$). 
Explicitly, 
\begin{equation}
\begin{aligned}
&\ket{\psi_{\mathrm{even}}^{(2)}}=\frac{1}{\sqrt{2}}(\ket{00}_{F_aF_b}-\ket{11}_{F_aF_b})\qquad \text{and}\\ &\ket{\psi_{\mathrm{odd}}^{(2)}}=\frac{1}{\sqrt{2}}(\ket{01}_{F_aF_b}+\ket{10}_{F_aF_b}).
\label{eq:evenodd}
\end{aligned}
\end{equation}
Here the label even (odd) indicates an even (odd) number of entries 1 in each term of the superposition.  Note that the vectors in Eq.~(\ref{eq:evenodd}) were also employed in Ref.~\cite{McKague2009}. Here we give a constructive explanation of them: They are the canonical representatives of equivalence classes.
A related approach had been investigated in Ref.~\cite{Myrheim}; we refer to Section G of the Supplementary Material \cite{supp_mat} for a detailed comparison with our construction.

We point out that, even though the states in Eq.~(\ref{eq:evenodd}) formally resemble entanglement, they describe separable states.
We define a pure state in real quantum mechanics as separable if and only if it is either a product state or equivalent to a product state. 
Otherwise, it is entangled.
Thus, the states $\ket{\psi_{\mathrm{even}}^{(2)}}$ and $\ket{\psi_{\mathrm{odd}}^{(2)}}$ are in a superposition without being entangled since they are equivalent to the product states $\ket{00}_{F_aF_b}$ 
and $\ket{01}_{F_aF_b}$, respectively. This is reminiscent of the symmetrised wavefunction of indistinguishable particles, which is in a superposition that also does not imply entanglement \cite{Eckert_2002}.

Thus,
the elements of the whole quotient space $\mathbb R_A^4\otimes\mathbb R_B^4/{\rm ker}[(\mathcal S^{-1})^{\otimes2}]$ can be conveniently expressed  as Eq.~(\ref{eq:map1partystates}), with $\ket{0}_{F}$ and $\ket{1}_{F}$ now replaced by the flag equivalence classes $\kett{0}_F$ and $\kett{1}_F$ or equivalently by the canonical representatives $\ket{\psi_{\mathrm{even}}^{(2)}}$ and $\ket{\psi_{\mathrm{odd}}^{(2)}}$, 
see Section A of the Supplementary Material \cite{supp_mat} for details. Since we did not specify the ``nature'' of the vectors $\ket0_{F}$ and $\ket1_{F}$, Eq.~(\ref{eq:map1partystates}) and all its consequences, in particular Eq.~(\ref{eq:expvalues}), are valid also for composite systems, thus proving that the presented real-valued formulation of quantum mechanics reproduces the predictions of the complex quantum mechanics for all experiments, also multipartite ones. An explicit real-valued formulation of the experiment suggested in Ref.~\cite{Renou2021} is given in Section H of the Supplementary Material \cite{supp_mat}.

Remember that our central argument is the physically motivated postulate (P4). In order to show how our construction guarantees the local action of operators, 
consider the tensor product of two operators  of the form given in Eq.~(\ref{eq:RPi}): Tensor products $I_{F_a}\otimes I_{F_b}$, $I_{F_a}\otimes J_{F_b}$, $J_{F_a}\otimes I_{F_b}$, and $J_{F_a}\otimes J_{F_b}$ of the flag operators appear. As can be easily seen, $I_{F_a}\otimes I_{F_b}$ and $-J_{F_a}\otimes J_{F_b}$ act as the identity operator $I_F$ on the basis states (equivalence classes) $\kett{0}_F$ and $\kett{1}_F$, see Eq.~(\ref{eq:FlagEquivClasses}), of the flag of the composite system. Analogously, $I_{F_a}\otimes J_{F_b}$ and $J_{F_a}\otimes I_{F_b}$ act as $J_F$ on these basis states (see also Eq.~\eqref{eq:IJN+M} of~\cite{supp_mat}).\footnote{Also these four product operators map canonical representatives into canonical representatives.} Hence, the embedding of the local actions on subsystems into the operators for the composite system is well-defined. The postulate of local action (P4) is satisfied because the underlying structure is still the tensor product of the Hilbert spaces, though with an additional equivalence relation.  (The satisfaction of  (P4) in real and complex quantum mechanics is analyzed in more detail in Section E of the Supplementary Material \cite{supp_mat}.) The reduced density operator in the real-valued formalism is still given via the (real) partial trace, which is derived in Section F of the Supplementary Material \cite{supp_mat}. 

In summary, we propose a formulation of quantum mechanics based on real numbers, which leads to exactly the same predictions for experimental outcomes as complex quantum mechanics (see Eq.~\eqref{eq:expvalues} and discussion below). Thus, we conclude that both theories are equivalent in this sense.
In order to arrive at this formulation, we show that the tensor product postulate does not capture the ambiguity in assigning the global phase of a product state to its subsystems \cite{kickback1,kickback2,kickback3,q_alg_rev}.
Instead, we argue that a postulate based on {\it physical} properties for composite systems -- namely  that a local operation on one subsystem acts trivially on the other -- is well motivated and does lead to a theory that
corresponds to complex quantum mechanics.
The above-mentioned phase ambiguity is ensured by employing the quotient space of the real tensor product. Moreover, in Section A of the Supplementary Material we show that, under certain natural conditions related to representations of SO(2) rotations, our construction 
not only fulfills the required physical postulate but is also unique, up to Hilbert space isomorphisms.

Note that a similar real representation has been discussed in the literature before \cite{Stueckelberg, Myrheim, McKague2009, Aleksandrova2013} -- however, without linking it to the role of the tensor product postulate.
We mention that a tensor product in complex quantum mechanics can also be formulated as a quotient space \cite{elkies_tensor_2010,Holevo2002} and fulfills our physically motivated postulate. Let us point out other recent papers about equivalent real-valued formulations of quantum mechanics \cite{volovichkahler,mischa}. 

As stated in Ref. \cite{Spekkens}, the results of Refs. \cite{Renou2021,Renou2025,gisin,elliot} are best understood not as a claim about the impossibility of a real-valued Hilbert space representation of quantum theory in principle, ``but rather as a claim about the possibility of experimentally
adjudicating between standard quantum theory and an alternative theory--a foil theory--known
as real-amplitude quantum theory'' (which includes the real-valued version of the tensor product postulate).

In conclusion, complex numbers are not necessary to describe quantum mechanics -- but they are certainly very useful. 

We thank David Edward Bruschi, Markus Grassl, Alexander Holevo, Alexander Kegeles, Yien Liang, Peter Ken Schuhmacher, and Igor Volovich for stimulating discussions during the course of this work. We also thank an anonymous referee for constructive criticism that helped to improve clarity of this manuscript.
This project was made possible by the DLR Quantum Computing Initiative and the Federal Ministry for Economic Affairs and Climate Action. HK and DB acknowledge support by
Deutsche Forschungsgemeinschaft
(DFG, German Research Foundation) under Germany’s
Excellence Strategy -- Cluster of Excellence Matter and
Light for Quantum Computing (ML4Q) EXC 2004/1 --
390534769. AT, HK and DB acknowledge support by
the German Ministry of Education and Research
(Project QuKuK, BMBF Grant No. 16KIS1618K).

\newpage
\bibliography{real_qm_paladins.bib}

\clearpage
\onecolumngrid
\setcounter{section}{0}
\renewcommand{\thesection}{\Alph{section}}
\setcounter{equation}{0}
\renewcommand{\theequation}{S\arabic{equation}}
\setcounter{figure}{0}
\renewcommand{\thefigure}{S\arabic{figure}}
\setcounter{table}{0}
\renewcommand{\thetable}{S\arabic{table}}

\begin{center}
{\large\bfseries Supplemental Material for\\[0.5em]
``Quantum mechanics based on real numbers: A consistent description''}

\vspace{1em}
Pedro Barrios Hita,$^{1,2}$ Anton Trushechkin,$^{2}$ Hermann Kampermann,$^{2}$ Michael Epping,$^{1}$ and Dagmar Bru\ss$^{2}$

\vspace{0.5em}
{\small
$^{1}$Institute of Software Technology, German Aerospace Center (DLR), Sankt Augustin, Germany\\
$^{2}$Heinrich Heine University D\"usseldorf, Faculty of Mathematics and Natural Sciences, Institute for Theoretical Physics III
}
\end{center}

\vspace{1em}
\section{Uniqueness of the mapping of complex-valued to real-valued quantum mechanics}
\label{ap:uniqueness}

In this section, we prove that the suggested construction of quantum mechanics based on real numbers (some elements of which appeared in earlier papers), see equations~(1) and~(4), is unique under certain assumptions. Our construction is based on the isomorphism between the groups U(1) and SO(2) and the requirement that the U(1)-ambiguity in the definition of a complex (pure) quantum state transforms into the SO(2)-ambiguity for the real case.

\begin{proposition}
\label{prop:uniquestates}
Consider a map $\mathcal S$ from $\mathbb C^d$ to some (unknown) real Hilbert space $\mathcal H$ satisfying the following properties:
\begin{enumerate}[(i)]
\item $\mathcal S$ is real-linear, i.e.,
\begin{equation}
\mathcal S(a\ket\phi+b\ket\psi)=a\mathcal S(\ket\phi)+b\mathcal S(\ket\psi)
\end{equation}
for arbitrary $\ket\psi,\ket\phi\in\mathbb C^d$ and $a,b\in\mathbb R$.

\item A phase factor is mapped to the corresponding $\rm SO(2)$ rotation:
\begin{equation}
\mathcal S(e^{i\alpha}\ket\psi)=g(R_{\alpha})\mathcal S(\ket\psi),
\end{equation}
where $R_\alpha\in {\rm SO}(2)$ is the rotation on the angle $\alpha\in\mathbb R$ and $g$ is a representation of $\rm SO(2)$.

\item The map $\mathcal S$ preserves length and orthogonality: If $\braket{\phi|\psi}$ is zero or one, then $\mathcal S(\ket\phi)^T\mathcal S(\ket\psi)$ is also zero or one, respectively.

\item The range of $\mathcal S$ is the whole $\mathcal H$.
\end{enumerate}

Then $\mathcal H\cong\mathbb R^d\otimes\mathbb R^2$ (here and in the following, $\cong$ denotes the Hilbert space isomorphism) and there exists an orthonormal basis $\{\ket{e_j}\}_{j=1}^d$ of $\mathbb C^d$ and an orthonormal basis $\{\ket{\widetilde e_j}\otimes\ket0_F,\ket{\widetilde e_j}\otimes\ket1_F\}_{j=1}^d$ of $\mathcal H$ such that 
\begin{equation}
\label{eq:Sdetailed}
\mathcal S\left(
\sum_{j=1}^d
c_j\ket{e_j}
\right)
=\sum_{j=1}^d
[{\rm Re}(c_j)
\ket{\widetilde e_j}
\otimes\ket0_F
+
{\rm Im}(c_j)
\ket{\widetilde e_j}
\otimes\ket1_F]
\end{equation}
for arbitrary complex $c_j$,
or, in short,
\begin{equation}
\label{eq:S}
\mathcal S(\ket\psi)=
{\rm Re}(\ket{\psi})\otimes\ket0_F
+
{\rm Im}(\ket{\psi})\otimes\ket1_F.    
\end{equation}
\end{proposition}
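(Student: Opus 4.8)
The plan is to deduce the stated normal form from the four hypotheses by first pinning down the representation $g$ of $\mathrm{SO}(2)$, then using it to build the flag factor, and finally reconstructing the $\mathbb{R}^d$ factor from an orthonormal basis of $\mathbb{C}^d$. First I would analyze hypothesis (ii). Since $g$ is a representation of $\mathrm{SO}(2)$ and, by (ii) applied to $\alpha=2\pi$, we have $g(R_{2\pi})=g(R_0)=\mathrm{id}$ acting trivially, the relevant representation is a genuine representation of the circle group. Because $\mathcal{S}(e^{i\alpha}\ket\psi)$ must stay normalized (by (iii), lengths are preserved) and the action is by a one-parameter rotation group, $g$ must act by orthogonal transformations; I would argue that on the two-dimensional span generated by $\mathcal{S}(\ket\psi)$ and $\mathcal{S}(i\ket\psi)$ it acts as the standard rotation $R_\alpha$. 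The key observation is that $\mathcal{S}(\ket\psi)$ and $\mathcal{S}(i\ket\psi)=g(R_{\pi/2})\mathcal{S}(\ket\psi)$ are orthonormal: orthogonality follows from $\braket{\psi|i\psi}=i$ having vanishing real part together with hypothesis (iii), and each is a unit vector. This furnishes, for each $\ket\psi$, a two-dimensional real subspace on which $g$ acts as $\mathrm{SO}(2)$, which I will identify with the flag space $\mathbb{R}^2_F$ via $\mathcal{S}(\ket\psi)\mapsto \ket\psi\otimes\ket0_F$ and $\mathcal{S}(i\ket\psi)\mapsto \ket\psi\otimes\ket1_F$.

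Next I would fix an orthonormal basis $\{\ket{e_j}\}_{j=1}^d$ of $\mathbb{C}^d$ and examine the $2d$ vectors $\{\mathcal{S}(\ket{e_j}),\,\mathcal{S}(i\ket{e_j})\}_{j=1}^d$. Using real-linearity (i) together with the orthogonality-preservation (iii), I would show these $2d$ vectors form an orthonormal set in $\mathcal{H}$. The cross-terms $\mathcal{S}(\ket{e_j})^T\mathcal{S}(\ket{e_k})$ for $j\neq k$ vanish because $\braket{e_j|e_k}=0$; the term $\mathcal{S}(\ket{e_j})^T\mathcal{S}(i\ket{e_k})$ vanishes because $\braket{e_j|ie_k}=i\delta_{jk}$ is purely imaginary (for $j=k$ this is the orthogonality established in the previous paragraph, and for $j\neq k$ the inner product is zero). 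By real-linearity, $\mathcal{S}$ sends $\sum_j c_j\ket{e_j}=\sum_j[\mathrm{Re}(c_j)\ket{e_j}+\mathrm{Im}(c_j)\,i\ket{e_j}]$ to $\sum_j[\mathrm{Re}(c_j)\mathcal{S}(\ket{e_j})+\mathrm{Im}(c_j)\mathcal{S}(i\ket{e_j})]$, which is precisely the claimed expansion once we relabel $\mathcal{S}(\ket{e_j})=\ket{\widetilde e_j}\otimes\ket0_F$ and $\mathcal{S}(i\ket{e_j})=\ket{\widetilde e_j}\otimes\ket1_F$.

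To finish, I would invoke surjectivity (iv): the $2d$ orthonormal vectors just constructed span the image of $\mathcal{S}$, and since $\mathcal{S}$ is onto they form an orthonormal basis of all of $\mathcal{H}$, forcing $\dim_{\mathbb{R}}\mathcal{H}=2d$ and hence $\mathcal{H}\cong\mathbb{R}^d\otimes\mathbb{R}^2$. Defining $\ket{\widetilde e_j}$ to be the abstract label of the $j$-th $\mathbb{R}^d$ basis vector and declaring $\{\ket{\widetilde e_j}\otimes\ket0_F,\ket{\widetilde e_j}\otimes\ket1_F\}$ to be the corresponding product basis realizes the isomorphism explicitly, and Eq.~\eqref{eq:Sdetailed} holds by construction.

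I expect the main obstacle to be the rigorous identification of the representation $g$ in the first step: hypothesis (ii) only tells us $g$ is \emph{some} representation of $\mathrm{SO}(2)$, and one must rule out the possibility that $g$ acts trivially or reducibly in a way inconsistent with (iii). The careful point is to show that $g(R_{\pi/2})$ genuinely rotates $\mathcal{S}(\ket\psi)$ into a vector orthogonal to it (rather than, say, fixing it), which is where hypotheses (iii) and the faithfulness forced by length-preservation must be combined; once the action on a single two-dimensional flag plane is nailed down as the standard $\mathrm{SO}(2)$ rotation, the remaining steps are routine bookkeeping with orthonormality and linearity.
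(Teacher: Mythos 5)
Your proof has a genuine gap at exactly the point you yourself identify as ``the main obstacle'': the claim that $\mathcal S(\ket\psi)$ and $\mathcal S(i\ket\psi)=g(R_{\pi/2})\mathcal S(\ket\psi)$ are orthogonal. You justify this by saying that $\braket{\psi|i\psi}=i$ has vanishing real part ``together with hypothesis (iii)''. But hypothesis (iii) is deliberately weak: it only constrains $\mathcal S(\ket\phi)^T\mathcal S(\ket\psi)$ when the complex inner product $\braket{\phi|\psi}$ is exactly $0$ or exactly $1$. Since $\braket{\psi|i\psi}=i$ is neither, (iii) gives no information whatsoever about $\mathcal S(\ket\psi)^T\mathcal S(i\ket\psi)$. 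What you are implicitly using is the stronger property $\mathcal S(\ket\phi)^T\mathcal S(\ket\psi)=\mathrm{Re}\braket{\phi|\psi}$ -- but that is a \emph{consequence} of the final form of $\mathcal S$ (it is Eq.~\eqref{eq:scalarprod} of the paper), not a hypothesis, so invoking it here is circular. The same problem recurs in your basis bookkeeping for the diagonal terms $\mathcal S(\ket{e_j})^T\mathcal S(i\ket{e_j})$ (the off-diagonal terms $j\neq k$ are fine, since there the complex inner product really is $0$).

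The paper closes this gap by a representation-theoretic argument that your proposal omits entirely. From real-linearity, (ii), and surjectivity (iv) it derives two operator identities: $g(R_\alpha)=\cos\alpha\,\mathds 1+\sin\alpha\, g(R_{\pi/2})$ and, crucially, $g(R_{\pi/2})^2=-\mathds 1$ (the latter from $\mathcal S(i^2\ket\psi)=-\mathcal S(\ket\psi)$). Decomposing $g$ into irreducible real representations of $\mathrm{SO}(2)$, the identity $g(R_{\pi/2})^2=-\mathds 1$ rules out the trivial and all even-index components, so every irreducible block is two-dimensional with $g_n(R_{\pi/2})=\pm J_F$ antisymmetric; hence $\braket{w|g(R_{\pi/2})|w}=0$ for \emph{every} $w\in\mathcal H$, which is the orthogonality you need. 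Note that the relation $g(R_{\pi/2})^2=-\mathds 1$ cannot be bypassed: your observation that $g$ acts by orthogonal transformations (which does follow from norm preservation plus (iv)) is not enough, since e.g. $-\mathds 1$ is orthogonal yet maps every vector to an anti-parallel one. If you want to avoid the classification of $\mathrm{SO}(2)$ representations, there is a shortcut compatible with your setup: once you know $g(R_{\pi/2})$ is orthogonal \emph{and} squares to $-\mathds 1$, then $g(R_{\pi/2})^T=g(R_{\pi/2})^{-1}=-g(R_{\pi/2})$, i.e., it is antisymmetric, and orthogonality of $\mathcal S(\ket\psi)$ and $\mathcal S(i\ket\psi)$ follows immediately. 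But as written, your proposal derives neither $g(R_{\pi/2})^2=-\mathds 1$ nor antisymmetry, so the central step is unsupported; the remaining bookkeeping (orthonormality of the $2d$ images, surjectivity forcing $\dim\mathcal H=2d$, and the expansion via $c_j=\mathrm{Re}(c_j)+i\,\mathrm{Im}(c_j)$) matches the paper and is fine.
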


Note that condition (iii) means that the map $\mathcal S$ does not ``erase information'': States perfectly distinguishable in the complex quantum mechanics must be perfectly distinguishable also in the real case. Condition (iv) is technical. If the image of $\mathcal S$ is only a subspace of $\mathcal H$, then it has no sense to consider this part and we can actually restrict $\mathcal H$ to this subspace. Thus, (iv) does dot actually restrict generality, but says that only the image of $\mathcal S$ is relevant. Also, in the main text, we used the tilde for the real images of complex vectors. In Eq.~\eqref{eq:Sdetailed}, it is used in a slightly different sense: It denotes only the ``main'' (without a flag) part of the real image of the corresponding complex basis vectors.

\begin{proof}
Let us start the proof with a derivation of two  equalities. By linearity of $\mathcal S$, we have
\begin{equation}
\begin{split}
g(R_{\alpha})\mathcal S(\ket\psi)\overset{(ii)}{=}\mathcal S(e^{i\alpha}\ket\psi)
&\overset{(i)}{=}
\cos\alpha\,
\mathcal S(\ket\psi)
+
\sin\alpha\,
\mathcal S(i\ket\psi)
\\
&\overset{(ii)}{=}
\cos\alpha\,
\mathcal S(\ket\psi)
+
\sin\alpha\,
g(R_{\pi/2})
\mathcal S(\ket\psi)
\end{split}
\end{equation}
for an arbitrary vector $\ket\psi\in\mathbb C^d$ and $\alpha\in\mathbb R$. Since $\mathcal S(\ket\psi)$ can be an arbitrary vector from $\mathcal H$ (condition (iv)), we have
\begin{equation}
\label{eq:gRalpha}
g(R_\alpha) = \cos\alpha\,
\mathds{1}
+
\sin\alpha\,
g(R_{\pi/2}).       
\end{equation}    
Also,
\begin{equation}
-\mathcal S(\ket\psi)
=\mathcal S(i^2\ket\psi)
=g(R_{\pi/2})^2\mathcal S(\ket\psi).
\end{equation}
Again, due to the arbitrariness of $\mathcal S(\ket\psi)$, 
\begin{equation}
\label{eq:repsq}
g(R_{\pi/2})^2=g(R_{\pi})=-\mathds{1}.
\end{equation}
We see that $g(R_{\pi/2})$ plays the role of the imaginary unit.

The irreducible real representations of SO(2) are well-known. Namely, there is one trivial one-dimensional representation $g_0(R_\alpha)=1$ and a family of two-dimensional irreducible representations
\begin{equation}
\label{eq:irrep}
g_n(R_\alpha)=
\begin{pmatrix}
\cos n\alpha & 
-\sin n\alpha \\
\sin n\alpha &
\cos n\alpha
\end{pmatrix},
\end{equation}
for any integer $n$.
A general representation of SO(2) then has the form
\begin{equation}
\label{eq:repgen}
g(R_\alpha)=\bigoplus_{n\in\mathbb Z}g_n(R_\alpha)^{\oplus m_n},
\end{equation}
where $m_n\geq0$. The application of Eq.~(\ref{eq:repsq}) implies that $n$ is odd (i.e., $m_n=0$ for even $n$). Actually, only  $n=\pm1$ do not violate linearity of $\mathcal S$, which can be shown, e.g., by consideration of Eq.~(\ref{eq:gRalpha}) for small $\alpha$, but we will not need this.

Since all irreducible representations with  odd $n$ have dimension two,  $\mathcal H$ has the structure of a direct sum of two-dimensional subspaces. For an odd $n$, $g_n(R_{\pi/2})=\pm \begin{pmatrix}
0 & -1\\ 1 & 0
\end{pmatrix}=\pm J_F$ maps every vector to an orthogonal one in each of these subspaces, hence, 
\begin{equation}
\label{eq:orthorot}
\braket{\widetilde\psi|g(R_{\pi/2})|\widetilde\psi}=0
\end{equation}
for an arbitrary $\ket{\widetilde\psi}\in\mathcal H$. 

Consider an arbitrary orthonormal basis $\{\ket{e_j}\}_{j=1}^d$ of $\mathbb C^d$. Condition (iii) implies that the vectors $\mathcal S(\ket{e_j})$ are also orthonormal. Moreover,
\begin{equation}
\label{eq:orthorotSe}
0=\mathcal S(e^{i\alpha}\ket{e_j})^T
\mathcal S(e^{i\beta}\ket{e_k})
=[g(R_{\alpha})\mathcal S(\ket{e_j})]^T
[g(R_{\beta})
\mathcal S(\ket{e_k})]
\end{equation}
for $j\neq k$ and all real $\alpha$ and $\beta$. In view of Eqs.~\eqref{eq:gRalpha} and \eqref{eq:orthorot}, the sets 
\begin{equation}
\mathcal H_j=\{ag(R_{\alpha})\mathcal S(\ket{e_j})\,|\,a\geq0,\,\alpha\in\mathbb R\}
\end{equation} 
are two-dimensional subspaces of $\mathcal H$. Eq.~(\ref{eq:orthorotSe}) means that these subspaces are orthogonal to each other. Since the range of $\mathcal S$ is the whole $\mathcal H$, the dimensionality of the latter is $2d$. We can denote the elements of an orthonormal basis of $\mathcal H\cong\mathbb R^d\otimes\mathbb R_F^2$ as $\ket{\widetilde e_j,0}\cong\ket{\widetilde e_j}\otimes\ket0_F$ and $\ket{\widetilde e_j,1}\cong\ket{\widetilde e_j}\otimes\ket1_F$, where $j$ enumerates the subspaces $\mathcal H_j$ and $0$ and $1$ enumerate the orthonormal basis elements in these subspaces. Namely, we put by definition
\begin{equation}
\mathcal S(\ket{e_j})=\ket{\widetilde e_j}\otimes\ket0_F
\end{equation}
and
\begin{equation}
g(R_{\pi/2})\ket{\widetilde e_j}\otimes\ket0
=\ket{\widetilde e_j}\otimes\ket1_F.
\end{equation}
Thus,
\begin{equation}
\mathcal S(i\ket{e_j})=g(R_{\pi/2})\ket{\widetilde e_j}\otimes\ket0_F
=\ket{\widetilde e_j}\otimes\ket1_F
\end{equation}
and Eq.~(\ref{eq:Sdetailed}) follows from the decomposition $c_j={\rm Re}(c_j)+i\,{\rm Im}(c_j)$.
\end{proof}

\begin{proposition}
\label{prop:uniqueops}
Consider a map $\mathcal{T}$ from the space of linear operators on $\mathbb{C}^d$, i.e., $\mathcal{L}(\mathbb{C}^d)$, to the space of linear operators on a real Hilbert space $\mathcal{H}$, i.e., $\mathcal{L}(\mathcal{H})$, satisfying the following properties:
\begin{enumerate}[(i)]
\item $\mathcal{T}$ is real-linear, i.e.,
\begin{equation}
\mathcal{T}(aA+bB)=a\mathcal{T}(A)+b\mathcal{T}(B),
\end{equation}
for arbitrary $A,B\in\mathcal{L}(\mathcal{H})$ and $a,b\in\mathbb{R}$.
\item $\mathcal{T}(A)\mathcal{S}(\ket{\psi})=\mathcal{S}(A\ket{\psi})$ for $\ket{\psi}\in\mathbb{C}^d$ and $A\in\mathcal{L}(\mathbb{C}^d)$, were $\mathcal{S}$ is given in Eq.~\eqref{eq:S} (Eq.~\eqref{eq:map1partystates} in the main text) and satisfies the conditions of Proposition~\ref{prop:uniquestates}.
\item $[\mathcal{T}(A),R_F(\alpha)]=0$ for an arbitrary flag rotation $R_F(\alpha)$ (see Eq. \eqref{eq:map1phasestates}).
\end{enumerate}
Then it follows that $\mathcal{T}(A)=\mathrm{Re}(A)\otimes I_F+\mathrm{Im}(A)\otimes J_F$.
\end{proposition}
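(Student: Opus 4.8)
The plan is to use the two structural hypotheses in sequence: condition (iii) fixes the \emph{form} of $\mathcal{T}(A)$ up to two undetermined real matrices on $\mathbb{R}^d$, and condition (ii) then identifies those matrices with $\mathrm{Re}(A)$ and $\mathrm{Im}(A)$. Throughout I would invoke Proposition~\ref{prop:uniquestates}, which guarantees $\mathcal{H}\cong\mathbb{R}^d\otimes\mathbb{R}^2_F$, provides the explicit form of $\mathcal{S}$ in Eq.~\eqref{eq:S}, and lets me write the flag rotation as $R_F(\alpha)=\mathds{1}_d\otimes(\cos\alpha\,I_F+\sin\alpha\,J_F)$.

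\emph{Form from (iii).} I would expand an arbitrary operator $T\in\mathcal{L}(\mathcal{H})$ over a basis $\{I_F,J_F,\sigma_x,\sigma_z\}$ of $\mathcal{L}(\mathbb{R}^2_F)$, with $\sigma_x=\left(\begin{smallmatrix}0&1\\1&0\end{smallmatrix}\right)$ and $\sigma_z=\left(\begin{smallmatrix}1&0\\0&-1\end{smallmatrix}\right)$, writing $T=P\otimes I_F+Q\otimes J_F+M\otimes\sigma_x+N\otimes\sigma_z$ with uniquely determined real operators $P,Q,M,N$ on $\mathbb{R}^d$. Since $J_F^2=-I_F$, the pair $\{I_F,J_F\}$ generates a commutative algebra isomorphic to $\mathbb{C}$, so both $I_F$ and $J_F$ commute with $R_F(\alpha)$; on the other hand a direct computation gives $[\sigma_x,J_F]=2\sigma_z$ and $[\sigma_z,J_F]=-2\sigma_x$, whence
\begin{equation}
[T,R_F(\alpha)]=2\sin\alpha\,\bigl(M\otimes\sigma_z-N\otimes\sigma_x\bigr).
\end{equation}
Condition (iii) forces this to vanish for all $\alpha$, and the linear independence of $\sigma_x,\sigma_z$ then gives $M=N=0$. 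Hence $\mathcal{T}(A)=P\otimes I_F+Q\otimes J_F$ for some real $P,Q$; this is precisely the statement that the commutant of the $\mathrm{SO}(2)$ flag action is the complex field $\mathrm{span}\{I_F,J_F\}$.

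\emph{Coefficients from (ii).} Next I would apply $P\otimes I_F+Q\otimes J_F$ to $\mathcal{S}(\ket\psi)=\mathrm{Re}(\ket\psi)\otimes\ket0_F+\mathrm{Im}(\ket\psi)\otimes\ket1_F$, using $J_F\ket0_F=\ket1_F$ and $J_F\ket1_F=-\ket0_F$, to obtain
\begin{equation}
\mathcal{T}(A)\mathcal{S}(\ket\psi)=\bigl(P\,\mathrm{Re}(\ket\psi)-Q\,\mathrm{Im}(\ket\psi)\bigr)\otimes\ket0_F+\bigl(P\,\mathrm{Im}(\ket\psi)+Q\,\mathrm{Re}(\ket\psi)\bigr)\otimes\ket1_F.
\end{equation}
By condition (ii) this must equal $\mathcal{S}(A\ket\psi)$, whose two flag components are $\mathrm{Re}(A)\mathrm{Re}(\ket\psi)-\mathrm{Im}(A)\mathrm{Im}(\ket\psi)$ and $\mathrm{Re}(A)\mathrm{Im}(\ket\psi)+\mathrm{Im}(A)\mathrm{Re}(\ket\psi)$, read off from $A\ket\psi=(\mathrm{Re}(A)+i\,\mathrm{Im}(A))(\mathrm{Re}(\ket\psi)+i\,\mathrm{Im}(\ket\psi))$. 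Since $\ket0_F$ and $\ket1_F$ are orthogonal, and since $\mathrm{Re}(\ket\psi)$ and $\mathrm{Im}(\ket\psi)$ range independently over all of $\mathbb{R}^d$ (take $\ket\psi=u+iv$ for arbitrary $u,v$), matching coefficients yields $P=\mathrm{Re}(A)$ and $Q=\mathrm{Im}(A)$, which is the claim.

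The only genuinely non-routine step is the first one: ruling out the symmetric flag components $\sigma_x,\sigma_z$, i.e.\ showing that the commutant of $\{R_F(\alpha)\}$ carries no flag part beyond $\mathrm{span}\{I_F,J_F\}$. This is the representation-theoretic heart of the argument and mirrors the $\mathrm{SO}(2)$-analysis used in Proposition~\ref{prop:uniquestates}. I note that condition (i) is not actually needed for the per-operator argument above, being an automatic consequence (both $\mathrm{Re}$ and $\mathrm{Im}$ are real-linear); and that, as a shortcut, the bijectivity of $\mathcal{S}$ from Proposition~\ref{prop:uniquestates} already makes condition (ii) \emph{alone} determine $\mathcal{T}(A)=\mathcal{S}\circ A\circ\mathcal{S}^{-1}$, though the route through (iii) better exhibits the structural role of the hypotheses.
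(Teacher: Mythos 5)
Your proof is correct, and it shares the paper's key structural lemma---that condition (iii) forces the flag part of $\mathcal{T}(A)$ into $\mathrm{span}\{I_F,J_F\}$ (the commutant of the $\mathrm{SO}(2)$ flag action)---but the rest is organized along a genuinely leaner route. The paper first uses conditions (i) and (ii) together with surjectivity of $\mathcal{S}$ to derive the operator identity $\mathcal{T}(A)=\mathcal{T}(\mathrm{Re}(A))+R_F(\pi/2)\,\mathcal{T}(\mathrm{Im}(A))$, then invokes (iii) for the $I_F/J_F$ form (stated only as a ``straightforward calculation,'' which your Pauli-basis commutator computation makes explicit), and finally applies (ii) only to \emph{real} operators and real vectors to get $\mathcal{T}(B)=B\otimes I_F$, assembling the result from these pieces. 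You instead skip the decomposition of $A$ entirely: after extracting the form $P\otimes I_F+Q\otimes J_F$ from (iii), you apply (ii) to a general $A$ and pin down both $P=\mathrm{Re}(A)$ and $Q=\mathrm{Im}(A)$ in one pass by letting $\mathrm{Re}(\ket\psi)$ and $\mathrm{Im}(\ket\psi)$ vary independently. This buys two things the paper's proof does not make visible: condition (i) is never used (and, as you note, is automatic from (ii) plus real-linearity of $\mathcal{S}$), and your closing remark that (ii) alone already forces $\mathcal{T}(A)=\mathcal{S}\circ A\circ\mathcal{S}^{-1}$ shows the hypotheses are redundant as a characterization---conditions (i) and (iii) serve an expository rather than logical role. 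The paper's ordering, conversely, better isolates how each postulate contributes, which is presumably why the authors chose it.
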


Note that $R_\alpha$ in Proposition~\ref{prop:uniquestates} denotes an abstract element of SO(2) and $R_F(\alpha)$ denotes a rotation operator in $\mathbb R^2_F$ defined in \eqref{eq:map1phasestates}, i.e., a representation of $R_\alpha$.

\begin{proof}
From conditions (i) and (ii) we can expand
\begin{equation}
\begin{aligned}
\mathcal{T}(A)\mathcal{S}(\ket{\psi})&=\mathcal{T}(\mathrm{Re}(A)+i\mathrm{Im}(A))\mathcal{S}(\ket{\psi})\overset{(i)}{=}\mathcal{T}(\mathrm{Re}(A))\mathcal{S}(\ket{\psi})+\mathcal{T}(i\mathrm{Im}(A))\mathcal{S}(\ket{\psi})\\
&\overset{(ii)}{=}\mathcal{T}(\mathrm{Re}(A))\mathcal{S}(\ket{\psi})+\mathcal{S}(i\mathrm{Im}(A)\ket{\psi})=\mathcal{T}(\mathrm{Re}(A))\mathcal{S}(\ket{\psi})+R_F\Bigl(\frac{\pi}{2}\Bigr)\mathcal{S}(\mathrm{Im}(A)\ket{\psi})\\
&\overset{(ii)}{=}\Bigl(\mathcal{T}(\mathrm{Re}(A))+R_F\Bigl(\frac{\pi}{2}\Bigr)\mathcal{T}(\mathrm{Im}(A))\Bigr)\mathcal{S}(\ket{\psi}).
\end{aligned}
\end{equation}
Since the range of $\mathcal{S}$ is the whole $\mathcal{H}$, we can write the following operator equality
\begin{equation}
\label{eq:prop2eq1}
\mathcal{T}(A)=\mathcal{T}(\mathrm{Re}(A))+R_F\Bigl(\frac{\pi}{2}\Bigr)\mathcal{T}(\mathrm{Im}(A)).
\end{equation}

Straightforward calculation shows that property (iii) implies the general form $\mathcal{T}(A)=\mathcal{T}'(A)\otimes I_F+\mathcal{T}''(A)\otimes J_F$ where $\mathcal{T}'(A)$ and $\mathcal{T}''(A)$ are operators acting on $\mathbb R^d$. Now let us consider a real operator $B$ and a vector $\ket{\varphi}$ which we can choose to be real. We expand
\begin{align}
\label{eq:TASpsi}
\mathcal{T}(B)\mathcal{S}(\ket{\varphi})&=(\mathcal{T}'(B)\otimes I_F+\mathcal{T}''(B)\otimes J_F)\ket{\varphi}\otimes\ket{0}_F\\ \nonumber
&=\mathcal{T}'(B)\ket{\varphi}\otimes\ket{0}_F+\mathcal{T}''(B)\ket{\varphi}\otimes\ket{1}_F,\\
\mathcal{S}(B\ket{\varphi})&=B\ket{\varphi}\otimes\ket{0}_F.
\end{align}
Since this equations must be equal for all $\ket{\varphi}$, it follows immediately that $\mathcal{T}'(B)=B$ and $\mathcal{T}''(B)=0$ and thus $\mathcal{T}(B)=B\otimes I_F$ if $B$ is a real operator. Now substituting it  into Eq. \eqref{eq:prop2eq1} and noting that $\mathrm{Im}(A)$ is a real operator, we can rewrite
\begin{equation}
\mathcal{T}(A)=\mathrm{Re}(A)\otimes I_F + R_F\Bigl(\frac{\pi}{2}\Bigr)\mathrm{Im}(A)\otimes I_F=\mathrm{Re}(A)\otimes I_F +\mathrm{Im}(A)\otimes J_F,
\end{equation}
where we have used that $R_F(\frac{\pi}{2})=J_F$.
\end{proof}

Now let us show that the quotient space construction is essentially unique. Consider the case of a composite (for simplicity -- bipartite) system. Consider two complex Hilbert spaces $\mathbb C^{d_A}$  and $\mathbb C^{d_B}$, the vectors $\ket\psi\in\mathbb C^{d_A}$, $\ket\phi\in\mathbb C^{d_B}$ and the corresponding real vectors $\ket{\widetilde\psi}=\mathcal S(\ket\psi)$ and $\ket{\widetilde\phi}=\mathcal S(\ket\phi)$. From one side, according to the general description of the complex-to-real correspondence applied to the space $\mathbb C^{d_A}\otimes \mathbb C^{d_B}$, the complex vector $\ket\psi\otimes\ket\phi$ is mapped into the real vector
\begin{equation}
\mathcal S(\ket\psi\otimes\ket\phi)=
{\rm Re}(\ket\psi\otimes\ket\phi)\otimes\ket 0_F
+
{\rm Im}(\ket\psi\otimes\ket\phi)\otimes\ket 1_F
\in\mathbb R^{d_A+d_B}\otimes\mathbb R^2_F.
\end{equation}
From the other side, consider the quotient space construction. Let us introduce notations 
\begin{equation}
\label{eq:otimesFH}
\mathbb R^{2d_A}
\otimes_F
\mathbb R^{2d_B}
=
\mathbb R^{2d_A}
\otimes
\mathbb R^{2d_B}/{\rm ker}[(\mathcal S^{-1})^{\otimes 2}],
\end{equation}
with the quotient space described in Eq. \eqref{eq:FlagEquivClasses} (and in the text above it) and
\begin{equation}
\label{eq:otimesF}
\ket{\widetilde\psi}\otimes_F\ket{\widetilde\phi}=
[\ket{\widetilde\psi}\otimes\ket{\widetilde\phi}].
\end{equation}
Here $[\cdot]$ denotes the equivalence class of a vector. Namely,
\begin{equation}
\label{eq:otimesFpsiphiKett}
\ket{\widetilde\psi}\otimes_F\ket{\widetilde\phi}
=
(\ket{\widetilde\psi'}\otimes
\ket{\widetilde\phi'}
-\ket{\widetilde\psi''}
\otimes
\ket{\widetilde\phi''})
\otimes
\kett{0}_F
+
(\ket{\widetilde\psi'}\otimes
\ket{\widetilde\phi''}
+\ket{\widetilde\psi''}
\otimes
\ket{\widetilde\phi'})
\otimes
\kett{1}_F
\end{equation}
for $\ket{\widetilde\psi}=\ket{\widetilde\psi'}\otimes\ket0+\ket{\widetilde\psi''}\otimes\ket1$ and $\ket{\widetilde\phi}=\ket{\widetilde\phi'}\otimes\ket0+\ket{\widetilde\phi''}\otimes\ket1$. The notation $\otimes_F$ (``flag tensor product'') is inspired by the symmetrized and antisymmetrized tensor product $\otimes_{\rm s}$ and $\otimes_{\rm a}$ used for indistinguishable particles: a tensor product followed by a kind of symmetrization operation. The replacement of, e.g., $\ket{0}_{F_a}\otimes\ket{0}_{F_b}$ by $\kett{0}_F$, or (see the main text and the next section) $\ket{\psi^{(2)}_{\rm even}}$ is also a kind of symmetrization.

Let us describe notation in Eq.~(\ref{eq:otimesFpsiphiKett}). In the main text, we defined $\kett0_F$ and $\kett1_F$ as equivalence classes only for the flags:

\begin{equation}
\begin{split}
\kett{0}_{F}&=
\ket{\psi_{\mathrm{even}}^{(2)}}_{F_aF_b}
+{\rm span}\{\ket{00}_{F_aF_b}+\ket{11}_{F_aF_b},
\:
\ket{01}_{F_aF_b}-\ket{10}_{F_aF_b}\}
\\
&=
\big\{
\ket{\psi_{\mathrm{even}}^{(2)}}_{F_aF_b}
+a(
\ket{00}_{F_aF_b}+\ket{11}_{F_aF_b}
)
+
b(
\ket{01}_{F_aF_b}-\ket{10}_{F_aF_b}
)
\:\big|\:
a,b\in\mathbb R
\big\}
\end{split}
\end{equation}
and
\begin{equation}
\begin{split}
\kett{1}_{F}&=
\ket{\psi_{\mathrm{odd}}^{(2)}}_{F_aF_b}
+{\rm span}\{\ket{00}_{F_aF_b}+\ket{11}_{F_aF_b},
\:
\ket{01}_{F_aF_b}-\ket{10}_{F_aF_b}\}\\
&=
\big\{
\ket{\psi_{\mathrm{odd}}^{(2)}}_{F_aF_b}
+a(
\ket{00}_{F_aF_b}+
\ket{11}_{F_aF_b}
)
+
b(
\ket{01}_{F_aF_b}-\ket{10}_{F_aF_b}
)
\:\big|\:
a,b\in\mathbb R
\big\}.
\end{split}
\end{equation}
Then, $\ket\chi\otimes\kett0_F$ for $\ket\chi\in\mathbb R^{d_A}\otimes\mathbb R^{d_B}$ denotes the equivalence class
\begin{equation}
\begin{split} \ket\chi\otimes\kett0_F    &=\ket\chi\otimes
\ket{\psi_{\mathrm{even}}^{(2)}}_{F_aF_b}
+{\rm ker}[(\mathcal S^{-1})^{\otimes 2}]
\\
&=
\big\{
\ket\chi\otimes
\ket{\psi_{\mathrm{even}}^{(2)}}_{F_aF_b}
+
\ket\xi\otimes
(\ket{00}_{F_aF_b}+
\ket{11}_{F_aF_b}
)
+
\ket\zeta\otimes
(\ket{01}_{F_aF_b}-\ket{10}_{F_aF_b}
)
\:\big|\:
\ket\xi,\ket\zeta\in\mathbb R^{d_A}\otimes\mathbb R^{d_B}
\big\},
\end{split}    
\end{equation}
and analogously for $\ket\chi\otimes\kett1_F$. Note that $\ket\xi$ and $\ket\zeta$ are not necessarily normalized. The norms of $\ket\chi\otimes\kett0_F$ and $\ket\chi\otimes\kett1_F$ as elements of the quotient space are equal to the norms of the canonical representatives of the corresponding classes $\ket\chi\otimes
\ket{\psi_{\mathrm{even}}^{(2)}}_{F_aF_b}$ and $\ket\chi\otimes
\ket{\psi_{\mathrm{odd}}^{(2)}}_{F_aF_b}$, respectively.

A natural isomorphism between $\mathbb R^{d_A+d_B}\otimes \mathbb R^2_F$ and $\mathbb R^{d_A}\otimes_F\mathbb R^{d_B}$ is defined by identification of the vectors $\ket0_F$ and $\ket1_F$ from $\mathbb R^2_F$ with the flag equivalence classes $\kett0_F$ and $\kett1_F$ in the quotient space construction. Now we are ready to formulate that, up to an isomorphism, the quotient space construction is unique.

\begin{proposition}
\label{prop:uniqueOtimesF}
If $\mathcal S$ is real-linear and is of the form given in Eq.~(\ref{eq:S}), then
\begin{equation}
\mathcal S(\ket\psi\otimes\ket\phi)\cong
\mathcal
S(\ket\psi)\otimes_F \mathcal
S(\ket\phi).
\end{equation}
\end{proposition}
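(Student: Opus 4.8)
The plan is to compute both sides of the claimed identity explicitly and to observe that they coincide term by term under the natural isomorphism identifying $\ket0_F$ with $\kett0_F$ and $\ket1_F$ with $\kett1_F$. First I would split the complex vectors into real and imaginary parts, writing $\ket\psi=\ket{\psi'}+i\ket{\psi''}$ and $\ket\phi=\ket{\phi'}+i\ket{\phi''}$ with $\ket{\psi'}=\mathrm{Re}(\ket\psi)$, $\ket{\psi''}=\mathrm{Im}(\ket\psi)$, and analogously for $\phi$. Expanding the complex tensor product then gives
\begin{equation}
\ket\psi\otimes\ket\phi
=\bigl(\ket{\psi'}\otimes\ket{\phi'}-\ket{\psi''}\otimes\ket{\phi''}\bigr)
+i\bigl(\ket{\psi'}\otimes\ket{\phi''}+\ket{\psi''}\otimes\ket{\phi'}\bigr),
\end{equation}
so that the two bracketed real vectors are precisely $\mathrm{Re}(\ket\psi\otimes\ket\phi)$ and $\mathrm{Im}(\ket\psi\otimes\ket\phi)$.

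Next I would apply the form of $\mathcal S$ from Eq.~\eqref{eq:S} to the composite space $\mathbb C^{d_A}\otimes\mathbb C^{d_B}$, obtaining the left-hand side
\begin{equation}
\mathcal S(\ket\psi\otimes\ket\phi)
=\bigl(\ket{\psi'}\otimes\ket{\phi'}-\ket{\psi''}\otimes\ket{\phi''}\bigr)\otimes\ket0_F
+\bigl(\ket{\psi'}\otimes\ket{\phi''}+\ket{\psi''}\otimes\ket{\phi'}\bigr)\otimes\ket1_F.
\end{equation}
On the other side, since $\mathcal S(\ket\psi)=\ket{\psi'}\otimes\ket0+\ket{\psi''}\otimes\ket1$ one reads off $\ket{\widetilde\psi'}=\ket{\psi'}$ and $\ket{\widetilde\psi''}=\ket{\psi''}$ (and similarly for $\phi$); substituting these into the definition of the flag tensor product in Eq.~\eqref{eq:otimesFpsiphiKett} yields
\begin{equation}
\mathcal S(\ket\psi)\otimes_F\mathcal S(\ket\phi)
=\bigl(\ket{\psi'}\otimes\ket{\phi'}-\ket{\psi''}\otimes\ket{\phi''}\bigr)\otimes\kett0_F
+\bigl(\ket{\psi'}\otimes\ket{\phi''}+\ket{\psi''}\otimes\ket{\phi'}\bigr)\otimes\kett1_F.
\end{equation}
The two expressions have identical ``main'' parts, namely the vectors in $\mathbb R^{d_A}\otimes\mathbb R^{d_B}$ multiplying the flags; they differ only in whether the flag is written as $\ket0_F,\ket1_F$ or as the equivalence classes $\kett0_F,\kett1_F$.

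The remaining step is to confirm that the identification $\ket0_F\leftrightarrow\kett0_F$, $\ket1_F\leftrightarrow\kett1_F$ extends to a genuine Hilbert space isomorphism, so that the two displays are equal as elements of isomorphic spaces. Here I would use that the canonical representatives $\ket{\psi^{(2)}_{\mathrm{even}}}$ and $\ket{\psi^{(2)}_{\mathrm{odd}}}$ of Eq.~\eqref{eq:evenodd} are orthonormal, and that the quotient inner product is by construction the inner product of canonical representatives; hence the linear map sending the orthonormal flag basis $\{\ket0_F,\ket1_F\}$ of $\mathbb R^2_F$ to $\{\kett0_F,\kett1_F\}$ is an isometry onto the two-dimensional flag part of the quotient space, and tensoring it with the identity on $\mathbb R^{d_A}\otimes\mathbb R^{d_B}$ produces the asserted isomorphism $\cong$. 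I expect the only genuinely delicate point to be this well-definedness of the quotient inner product together with the verification that $\otimes_F$ respects it, i.e.\ that the norm of $\ket\chi\otimes\kett0_F$ agrees with that of its canonical representative $\ket\chi\otimes\ket{\psi^{(2)}_{\mathrm{even}}}$; the purely algebraic matching of the two sides is then immediate once the notation is unpacked.
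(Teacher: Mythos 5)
Your proof is correct and takes essentially the same route as the paper's: expand $\mathcal S(\ket\psi\otimes\ket\phi)$ via the real and imaginary parts of the complex tensor product and match the result, under the identification $\ket0_F\leftrightarrow\kett0_F$, $\ket1_F\leftrightarrow\kett1_F$, with the definition of $\otimes_F$ in Eq.~(\ref{eq:otimesFpsiphiKett}). Your version is in fact slightly more careful: your $\mathrm{Im}$ term carries the correct $+$ sign (the paper's displayed derivation has a minus-sign typo there), and your final paragraph verifying that the flag identification is an isometry just spells out what the paper invokes as the ``natural isomorphism'' defined before the proposition.
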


\begin{proof}
We have
\begin{equation}
\label{eq:otimesFderiv}
\begin{split}
\mathcal S(\ket\psi\otimes\ket\phi)&=
{\rm Re}(\ket\psi\otimes\ket\phi)\otimes\ket 0_F
+
{\rm Im}(\ket\psi\otimes\ket\phi)\otimes\ket 1_F
\\
&=\big[{\rm Re}(\ket{\psi})\otimes
{\rm Re}(\ket{\phi})
-{\rm Im}(\ket{\psi})
\otimes
{\rm Im}(\ket{\phi})\big]
\otimes
\ket{0}_F
+\big[{\rm Re}(\ket{\psi})\otimes
{\rm Im}(\ket{\phi})
-{\rm Im}(\ket{\psi})
\otimes
{\rm Re}(\ket{\phi})\big]
\otimes
\ket{1}_F
\\&\cong
\mathcal
S(\ket\psi)\otimes_F \mathcal
S(\ket\phi).
\end{split}
\end{equation}
\end{proof}

Of course, in the proof of Proposition~\ref{prop:uniqueOtimesF}, we have not ``derived'' the quotient space construction from scratch, but only showed an isomorphism when this construction has been already built. But the quotient space construction is natural to express cumbersome expressions like in Eqs.~(\ref{eq:otimesFpsiphiKett}) and (\ref{eq:otimesFderiv}) in a compact form as in Eq.~(\ref{eq:otimesF}).

There are real-number formulations of quantum mechanics in the infinite dimensional case, see, for example, Refs.~\cite{Stueckelberg, volovichkahler}. We believe that our construction can also be extended to infinite dimensions because we do not see any fundamental obstacles for this. Of course, certain attention will have to be paid to the domains of unbounded operators in the real-to-complex map.

\section{State map: General case}
\label{ap:state_map_gen_case}

In the main text and the previous section, we gave a construction of the Hilbert space of an arbitrary bipartite composite system. This is sufficient for the logical construction of the theory because, in the multipartite case, we can apply the bipartite rule sequentially. However, it is interesting to analyse the structure of the Hilbert space of a composite system for the  multipartite case.

For any $d\in\mathbb{N}$, the mapping between complex and real state vectors proposed in \cite{Myrheim} and \cite{McKague2009} reads
\begin{equation}
\begin{aligned}
\mathcal{S}\colon \mathbb{C}^d &\to\mathbb{R}^{2d},\\
\ket{\psi} &\mapsto \mathrm{Re}(\ket{\psi}) \otimes \ket{0}_F + \mathrm{Im}(\ket{\psi}) \otimes \ket{1}_F,
\end{aligned}
\end{equation}
where the real and imaginary parts of the state vectors are defined in a fixed basis. 

Now we extend the isomorphism $\mathcal{S}$ to systems composed of $N$ distinct parties. This extension guarantees that the resulting real and complex quantum theories are isomorphic, ensuring that any experiment expressible within complex quantum mechanics can equally be described within real quantum mechanics. For simplicity of notations, we assume that all subsystems have the same dimensionality $d$, but the analysis is completely the same in the case of different dimensionalities.

A natural first attempt is for $\ket{\psi_1},\ket{\psi_2},\ldots,\ket{\psi_N}\in\mathbb{C}^d$ to consider $\mathcal{S}^{\otimes N}\colon\ket{\psi_1}\otimes\cdots\otimes\ket{\psi_N}\mapsto \mathcal{S}(\ket{\psi_1})\otimes\cdots\otimes \mathcal{S}(\ket{\psi_N})$.
However, as discussed in the main text, this expression is not a well defined map. Thus one has to follow a different approach.
Note that $\mathcal{S}$ is invertible and the extension of $\mathcal{S}^{-1}$ to the tensor product of real Hilbert spaces (denoted by $(\mathcal{S}^{-1})^{\otimes N}$) is well defined. However, two elements in the domain of $(\mathcal{S}^{-1})^{\otimes N}$ have the same image and so their difference is mapped to the zero element, i.e., $\text{ker}[(\mathcal{S}^{-1})^{\otimes N}]$ is non-trivial.
This implies that $(\mathcal{S}^{-1})^{\otimes N}$ is not invertible (see Figure~\ref{fig:balloons}).
To overcome this issue we consider the quotient space of the domain by the kernel.
In this space the map is invertible, reading
\begin{equation}
\label{eq:Q}
\mathcal{Q}\colon\mathbb{C}^d\otimes\cdots\otimes\mathbb{C}^d\to\mathbb{R}^{2d}\otimes\cdots\otimes\mathbb{R}^{2d}/\text{ker}[(\mathcal{S}^{-1})^{\otimes N}].
\end{equation}
Every element in the quotient space is an equivalence class defined by the equivalence relation: $\ket{v}$ and $\ket{w}$ are equivalent (denoted as $\ket{v}\sim \ket{w}$) if $\ket{v}-\ket{w} \in \text{ker}[(\mathcal{S}^{-1})^{\otimes N}]$.

With this in mind we can distinguish $2d^N$ linearly independent equivalence classes, namely $[\ket{r_i}\otimes\ket{00\ldots0}_F]$ and $[\ket{r_i}\otimes\ket{10\ldots0}_F]$,
for $\{\ket{r_i}\}$ a basis of $\mathbb{R}^{d^N}$. We use the subscript ``$F$'' to denote to the flag states of every party for clearer notation.
And with these definitions,
\begin{equation}
(\mathbb{R}^{d})^{\otimes N}\otimes(\mathbb{R}^{2})^{\otimes N}/\text{ker}[(\mathcal{S}^{-1})^{\otimes N}]= \text{span}\Biggl(\bigcup_{i=1}^{d^N}\Big\{\big[\ket{r_i}\otimes\ket{00\ldots0}_F\big],		\big[\ket{r_i}\otimes\ket{10\ldots0}_F\big]\Big\}\Biggr).
\end{equation}

Now, for any $\ket{v}\in(\mathbb{R}^{d})^{\otimes N}\otimes(\mathbb{R}^{2})^{\otimes N}$, the canonical form of a representative of the equivalence class $[\ket{v}]$ is given by $\ket{v}^\perp =P^\perp\ket{v},$
where in our case $P^\perp$ is the projection operator onto $\text{ker}[(\mathcal{S}^{-1})^{\otimes N}]^\perp$. The element $\ket{v}^\perp$ is thus called the canonical representative of the equivalence class. It is unique and independent of the choice of $\ket{v}$ from the equivalence class and therefore identifies it. Thus, we define the projection
\begin{align}
\label{eq:P}
\mathcal{P}\colon (\mathbb{R}^{d})^{\otimes N}\otimes(\mathbb{R}^{d})^{\otimes N}/\text{ker}[(\mathcal{S}^{-1})^{\otimes N}]&\to \text{ker}[(\mathcal{S}^{-1})^{\otimes N}]^\perp
\subset (\mathbb{R}^{d})^{\otimes N},\\
[\ket{v}]&\mapsto P^\perp\ket{v}\nonumber,
\end{align}
the explicit form of the projector is 
\begin{equation}
\label{eq:Pperp}
P^\perp~=~\mathds{1}~\otimes~(\ket{\psi_{\mathrm{even}}^{(N)}}\bra{\psi_{\mathrm{even}}^{(N)}}_F+\ket{\psi_{\mathrm{odd}}^{(N)}}\bra{\psi_{\mathrm{odd}}^{(N)}}_F),
\end{equation}
where the subscript $F$ indicates that the projectors act on the flag subspace, and 
\begin{equation}
\begin{aligned}
\label{evenoddstates}	\ket{\psi_{\mathrm{even}}^{(N)}}&=\frac{1}{\sqrt{2^{N-1}}}\sum_{\substack{k\in\{0,1\}^N\\|k|_{\rm H} \text{ is even}}}(-1)^{\frac{|k|_{\rm H}}{2}}\ket{k},\\	\ket{\psi_{\mathrm{odd}}^{(N)}}&=\frac{1}{\sqrt{2^{N-1}}}\sum_{\substack{k\in\{0,1\}^N\\|k|_{\rm H} \text{ is odd}}}(-1)^{\frac{|k|_{\rm H}-1}{2}}\ket{k},
\end{aligned}
\end{equation}
and $|k|_{\rm H}$ is the sum of the binary digits $k_1,k_2,\ldots,k_N$ of $k$, also called the Hamming weight. We have included a normalization constant because these vectors will be used in the representation of quantum states that must satisfy postulate (P1). 

We now present how our construction extends states and operators from complex quantum mechanics to our proposed theory. 
A generic state in complex quantum mechanics $\ket{\psi}\in\mathbb{C}^d\otimes\cdots\otimes\mathbb{C}^d$ gets mapped into a real state in terms of canonical representatives of equivalence classes as
\begin{align}
\nonumber	
\mathcal{R}\colon\ket{\psi}\mapsto \mathcal{R}(\ket{\psi})=&\frac{\mathcal{P}\circ \mathcal{Q}(\ket{\psi})}{\|\mathcal{P}\circ \mathcal{Q}(\ket{\psi})\|}\\
\nonumber
=&\sqrt{2^{N-1}}\Bigl(P^\perp\big([\mathrm{Re}(\ket{\psi})\otimes\ket{00\ldots0}]\big)
+P^\perp\big([\mathrm{Im}(\ket{\psi})\otimes\ket{10\ldots0}]\big)\Bigr)\\
=&\mathrm{Re}(\ket{\psi})\otimes \ket{\psi_{\mathrm{even}}^{(N)}}_F + \mathrm{Im}(\ket{\psi})\otimes \ket{\psi_{\mathrm{odd}}^{(N)}}_F.
\label{eq:mapstates}            
\end{align}

\begin{remark}
\label{rem:tworepresentations}
The map of Eq.~(\ref{eq:map1partystates}) from the main text (applied to a composite system) and the one given in Eq.~(\ref{eq:mapstates}) are different representations of the same map. On the one hand, the flag Hilbert space is isomorphic to $\mathbb R^2$. The notations $\ket{0}_F$ and $\ket{1}_F$ from the main text denote the basis vectors from $\mathbb R^2$. On the other hand, the Hilbert space of a composite system is a quotient space of the tensor product of Hilbert spaces with respect to a certain equivalence relation. Then ${\rm Re}(\ket\psi)\otimes\kett{0}_F$ and ${\rm Im}(\ket\psi)\otimes\kett{1}_F$ are elements of this quotient space and, formally, 
\begin{equation}
\begin{aligned}
\mathcal S\colon (\mathbb C^d)^{\otimes N}&\to
(\mathbb R^d)^{\otimes N}\otimes \mathbb R^2
\cong
(\mathbb R^d\otimes\mathbb R^2)^{\otimes N}/{\rm ker}[(\mathcal S^{-1})^{\otimes N}]\\
\ket{\psi}&\mapsto\mathrm{Re}(\ket{\psi})\otimes\kett{0}_F+\mathrm{Im}(\ket{\psi})\otimes\kett{1}_F,
\end{aligned}
\end{equation}
where
$\cong$ denotes the Hilbert space isomorphism. Instead of equivalence classes ${\rm Re}(\ket\psi)\otimes\kett{0}_F$ and ${\rm Im}(\ket\psi)\otimes\kett{1}_F$, we can consider their canonical representatives ${\rm Re}(\ket\psi)\otimes\ket{\psi_{\mathrm{even}}^{(N)}}_F$ and ${\rm Im}(\ket\psi)\otimes\ket{\psi_{\mathrm{odd}}^{(N)}}_F$, which belong to  the tensor product $(\mathbb R^2)^{\otimes N}$. Thus, 
\begin{equation}
\begin{aligned}
\mathcal R\colon (\mathbb C^d)^{\otimes N}&\to
(\mathbb R^d\otimes\mathbb R^2)^{\otimes N}
\cong
(\mathbb R^d)^{\otimes N}
\otimes
\mathbb R^{2N}\\
\ket{\psi}&\mapsto\mathrm{Re}(\ket{\psi})\otimes\ket{\psi_{\mathrm{even}}^{(N)}}_F+\mathrm{Im}(\ket{\psi})\otimes\ket{\psi_{\mathrm{odd}}^{(N)}}_F
.
\end{aligned}
\end{equation}
These two approaches (working with the equivalence classes or their canonical representations) are equivalent. In the following, we will not distinguish these two cases and use the same notations for both. E.g., in the main text, we defined the map $\mathcal T\colon\mathcal L(\mathbb C^d)\to\mathcal L(\mathbb R^d\otimes\mathbb R^2)$ (with $\mathcal L$ denoting the space of linear operators in the corresponding Hilbert space). In the next subsection, we will use the same notation for a map from $\mathcal L\big((\mathbb C^d)^{\otimes N}\big)$ to $\mathcal L\big((\mathbb R^d\otimes\mathbb R^2)^{\otimes N}\big)$ (instead of $\mathcal L\big((\mathbb R^d)^{\otimes N}\otimes\mathbb R^2\big)$. Also we will not distinguish between equivalence classes and their canonical representations. But for vectors, the notations $\mathcal R$ and $\mathcal S$ are different because the map $\mathcal S$ is used in the definition of $\mathcal R$, see Eqs.~(\ref{eq:Q}), (\ref{eq:P}), and (\ref{eq:mapstates}).

\end{remark}

\begin{remark}
Let us also note that the vectors $\ket{\psi^{(N)}_{\rm even}}$ and $\ket{\psi^{(N)}_{\rm odd}}$ are not preserved under the tensor product. For example, $\ket{\psi^{(N)}_{\rm even}}
\otimes
\ket{\psi^{(M)}_{\rm even}}$ is not a linear combination of $\ket{\psi^{(N+M)}_{\rm even}}$ and $\ket{\psi^{(N+M)}_{\rm odd}}$. Instead of this, we have the following relations:
\begin{equation}
\label{eq:psievenoddNM}
\begin{split}
\ket{\psi^{(N+M)}_{\rm even}}
&=
\frac1{\sqrt2}
(\ket{\psi^{(N)}_{\rm even}}
\otimes
\ket{\psi^{(M)}_{\rm even}}
-
\ket{\psi^{(N)}_{\rm odd}}
\otimes
\ket{\psi^{(M)}_{\rm odd}}),\\
\ket{\psi^{(N+M)}_{\rm odd}}
&=
\frac1{\sqrt2}
(\ket{\psi^{(N)}_{\rm even}}
\otimes
\ket{\psi^{(M)}_{\rm odd}}
+
\ket{\psi^{(N)}_{\rm odd}}
\otimes
\ket{\psi^{(M)}_{\rm even}}).
\end{split}
\end{equation}
In this sense, the composition of two subsystems is not a ``local'' operation from the viewpoint of the tensor product $(\mathbb R^d\otimes\mathbb R^2_F)^{\otimes(N+M)}$.  But the composition is local from the viewpoint of the quotient space, where we do not distinguish certain vectors from $(\mathbb R^d\otimes\mathbb R^2_F)^{\otimes(N+M)}$ and consider equivalence classes. If $[\ket\psi]$ denotes the equivalence class of the vector $\ket\psi$, then, as we discussed in the main text,
\begin{equation}
\begin{split}
[\ket{\psi^{(N)}_{\rm even}}
\otimes
\ket{\psi^{(M)}_{\rm even}}]
&=[-\ket{\psi^{(N)}_{\rm odd}}
\otimes
\ket{\psi^{(M)}_{\rm odd}}]
= [\ket{\psi^{(N+M)}_{\rm even}}]
=\kett{0}_F,\\        
[\ket{\psi^{(N)}_{\rm even}}
\otimes
\ket{\psi^{(M)}_{\rm odd}}]
&=[\ket{\psi^{(N)}_{\rm odd}}
\otimes
\ket{\psi^{(M)}_{\rm even}}]
= [\ket{\psi^{(N+M)}_{\rm odd}}]        =\kett{1}_F.
\end{split}    
\end{equation}
This underlines the importance of the quotient space construction, which is central for this paper, while the even and odd vectors themselves were known before \cite{McKague2009}. 
If we do not distinguish between the equivalence classes and their canonical representatives, Eq.~(\ref{eq:otimesFpsiphiKett}) can be rewritten as
\begin{equation}
\label{eq:otimesFpsiphi}
\ket{\widetilde\psi}\otimes_F\ket{\widetilde\phi}
=
(\ket{\widetilde\psi'}\otimes
\ket{\widetilde\phi'}
-\ket{\widetilde\psi''}
\otimes
\ket{\widetilde\phi''})
\otimes
\ket{\psi^{(2)}_{\rm even}}
+
(\ket{\widetilde\psi'}\otimes
\ket{\widetilde\phi''}
+\ket{\widetilde\psi''}
\otimes
\ket{\widetilde\phi'})
\otimes
\ket{\psi^{(2)}_{\rm odd}}.
\end{equation}
\end{remark}

\begin{remark}
A product state in complex quantum mechanics can be mapped to a non-product state (not separable into tensor product factors) in the real-valued formalism. It is therefore natural to ask whether the converse is true: Are there cases where an entangled state in complex quantum mechanics gets mapped to a product state in the real version? The answer is no. 
The canonical representative states of the flag $\ket{\psi^{(N)}_{\rm even}}$ and $\ket{\psi^{(N)}_{\rm odd}}$ are always of non-product form.
\end{remark}

\section{Alternative derivation of the complex-to-real map for operators}
\label{ap:map_op_gen}

In this section, we give an alternative derivation of the map $\mathcal T$ given in Eq.~(\ref{eq:RPi}), which maps complex operators to their real counterparts based on the canonical representatives approach (see Remark~\ref{rem:tworepresentations}).

Let $A=\sum_j \ket{k_j}\bra{e_j}$ with $\ket{k_j},\ket{e_j}\in (\mathbb{C}^d)^{\otimes N}$ be a general linear operator acting on $N$ qudits of dimension $d$.
Because we already know how $\mathcal{R}$ acts on vectors, it is convenient to apply it to a vectorization of $A$, $\sum_j\ket{k_j}\otimes \ket{e_j}$, and obtain $\mathcal{R}(\sum_j \ket{k_j}\otimes\ket{e_j})$. 
Now if $\mathcal{R}$ would extend to tensor products, this would give $\sum_j \mathcal R(\ket{k_j})\otimes \mathcal R(\ket{e_j})$ and undoing the vectorization gives $\sum_j \mathcal R(\ket{k_j}) \mathcal R(\ket{e_j})^T$. 
One might be tempted to view this as the natural extension of $\mathcal{R}$ to operators.
However, similar to what we saw in the main text for states, the expression $\sum_i\ket{k_j}\bra{e_j}\mapsto \sum_j \mathcal{R}(\ket{k_j}) \mathcal{R}(\ket{e_j})^T$ is not a well-defined map. 
This is due to the difference in complex and real linear structure between the domain and the image. For example, despite $\ket{\psi}\otimes(\ket{\phi})^\dagger$ and $i\ket{\psi}\otimes(i\ket{\phi})^\dagger$ being  the same element of the domain, they do not have the same image, see also Figure~\ref{fig:balloons}.
However, the inverse map 
\begin{equation}
\label{eq:Rinv}
\mathcal{R}^{-1}(\ket{\widetilde\phi}) = \langle\psi_{\mathrm{even}}^{(N)}|\widetilde\phi\rangle+ i\langle\psi_{\mathrm{odd}}^{(N)}|\widetilde\phi\rangle,
\end{equation}
does extend to tensor products: $(\mathcal{R}^{-1})^{\otimes 2}(\ket{\widetilde\phi}\otimes \ket{\widetilde\chi})=\mathcal{R}^{-1}(\ket{\widetilde\phi})\otimes \mathcal{R}^{-1}(\ket{\widetilde\chi})$. Note that the right-hand side in Eq.~(\ref{eq:Rinv}) is a vector since   the scalar product is taken only for the flag part. Hence, we can use the vectorization described above to define
\begin{equation}
{\mathcal{T}}^{-1}_{\rm L}\left(\sum_j\ket{\widetilde\phi_j}\bra{\widetilde\chi_j}\right) = \frac{1}{2}\sum_j \mathcal{R}^{-1}(\ket{\widetilde\phi_j})\mathcal{R}^{-1}(\ket{\widetilde\chi_j})^T,  \label{eq:Rinverse}
\end{equation}
where we added a factor of $\frac{1}{2}$ such that, as will become clear later, expectation values are reproduced.
This map is not invertible due to the non-triviality of the kernel: Any element of the form
\begin{equation}
\label{eq:quo-space-ops}
\sum_j \left(\mathcal{R}(\ket{k_j})\mathcal{R}(\ket{e_j})^T-\mathcal{R}(e^{i\alpha_j}\ket{k_j})\mathcal{R}(e^{i\alpha_j}\ket{e_j})^T\right) 
\end{equation}
with real phases $\alpha_j$ is mapped to zero. However, as we are going to show, $\mathcal T^{-1}_{\rm L}$ is the left inverse of an operator $\mathcal T$ to be defined.
In order to define it, we consider the quotient by the kernel.
Elements of the resulting equivalence classes differ only in the phases.
As a canonical representative for the equivalence classes we therefore choose
\begin{equation}
\label{eq:canrepop}
\sum_j \mathcal{R}(\ket{k_j})\mathcal{R}(\ket{e_j})^T+\mathcal{R}(i\ket{k_j})\mathcal{R}(i\ket{e_j})^T\in\text{ker}[(\mathcal{R}^{-1})^{\otimes 2}]^\perp,
\end{equation}
which, as can be seen from direct calculation, is independent of any additional phases.
Inserting Eq.~\eqref{eq:mapstates} into Eq.~\eqref{eq:canrepop} gives 
\begin{align}
\mathcal{T}(A):=&\sum_j \mathcal{R}(\ket{k_j})\mathcal{R}(\ket{e_j})^T+\mathcal{R}(i\ket{k_j})\mathcal{R}(i\ket{e_j})^T
\nonumber\\
=&\sum_j\Bigl(\mathrm{Re}(\ket{k_j})\otimes\ket{\psi_{\mathrm{even}}^{(N)}}+\mathrm{Im}(\ket{k_j})\otimes\ket{\psi_{\mathrm{odd}}^{(N)}}\Bigr)
\Bigl(\mathrm{Re}(\ket{e_j})^T\otimes\bra{\psi_{\mathrm{even}}^{(N)}}+\mathrm{Im}(\ket{e_j})^T\otimes\bra{\psi_{\mathrm{odd}}^{(N)}}\Bigr)
\nonumber\\
&+\sum_j\Bigl(-\mathrm{Im}(\ket{k_j})\otimes\ket{\psi_{\mathrm{even}}^{(N)}}+\mathrm{Re}(\ket{k_j})\otimes\ket{\psi_{\mathrm{odd}}^{(N)}}\Bigr)
\Bigl(-\mathrm{Im}(\ket{e_j})^T\otimes\bra{\psi_{\mathrm{even}}^{(N)}}+\mathrm{Re}(\ket{e_j})^T\otimes\bra{\psi_{\mathrm{odd}}^{(N)}}\Bigr)
\nonumber\\
=&\sum_i\mathrm{Re}(\ket{k_i}\bra{e_i})\otimes I^{(N)}+\mathrm{Im}(\ket{k_i}\bra{e_i})\otimes J^{(N)}
\nonumber\\
=&\mathrm{Re}(A)\otimes I^{(N)}+\mathrm{Im}(A)\otimes J^{(N)}, 
\label{eq:Roperator}
\end{align}
where we introduced the notations
\begin{equation}
\label{eq:IJN}
I^{(N)}=\ket{\psi^{(N)}_{\rm even}}\bra{\psi^{(N)}_{\rm even}}+\ket{\psi^{(N)}_{\rm odd}}\bra{\psi^{(N)}_{\rm odd}},\qquad
J^{(N)}=\ket{\psi^{(N)}_{\rm odd}}\bra{\psi^{(N)}_{\rm even}}-\ket{\psi^{(N)}_{\rm even}}\bra{\psi^{(N)}_{\rm odd}}.
\end{equation}
We have obtained Eq.~\eqref{eq:RPi} in the main text since $I^{(N)}$ and $J^{(N)}$ are actually realizations of the operators $I_F$ and $J_F$ for the subspace spanned by $\ket{\psi^{(N)}_{\rm even}}$ and $\ket{\psi^{(N)}_{\rm odd}}$. Formally (see Remark~\ref{rem:tworepresentations}), $I_F$ and $J_F$ act in $\mathbb R^2$, while $I^{(N)}$ and $J^{(N)}$ act in $(\mathbb R^2)^{\otimes N}$, but we only interested in their action in the subspace of the canonical representatives of the equivalence classes (i.e., spanned by $\ket{\psi^{(N)}_{\rm even}}$ and $\ket{\psi^{(N)}_{\rm odd}}$), but, as we said in Remark~\ref{rem:tworepresentations}, we will not distinguish between these two approaches.

We verify that ${\mathcal{T}}^{-1}_{\rm L}$ is indeed the left inverse of $\mathcal{T}$: Applying Eq.~\eqref{eq:Rinverse} to Eq.~\eqref{eq:Roperator} gives
\begin{equation}
\begin{aligned}
{\mathcal{T}}^{-1}_{\rm L} ({\mathcal{T}}(A)) =& {\mathcal{T}}^{-1}_{\rm L} \left({\mathcal{T}}\left(\sum_j \ket{k_j}\bra{e_j}\right)\right)\\
=& {\mathcal{T}}^{-1}_{\rm L}\left(\sum_j\mathcal{R}(\ket{k_j})\mathcal{R}(\ket{e_j})^T\right)+{\mathcal{T}}^{-1}_{\rm L}\left(\sum_j\mathcal{R}(i\ket{k_j})\mathcal{R}(i\ket{e_j})^T\right)\\
=&  \frac{1}{2} \sum_j\ket{k_j}\bra{e_j} + \frac{1}{2}\sum_j i \ket{k_j}(-i\bra{e_j})\\
=& A.
\end{aligned}
\end{equation}
Note, that the choice of canonical representative in Eq.~\eqref{eq:canrepop} is not unique. Namely, multiplication by any real coefficient would also give an element of $\text{ker}[(\mathcal{R}^{-1})^{\otimes 2}]^\perp$ since it would still be orthogonal to any element of the form of Eq.~\eqref{eq:quo-space-ops}. However, the canonical representative chosen is the only one that leads to an algebraic homomorphism, i.e., ${\mathcal{T}}(A_1A_2)={\mathcal{T}}(A_1){\mathcal{T}}(A_2)$.

Using the definition of Eq.~(\ref{eq:IJN}) of $I^{(N)}$ and $J^{(N)}$ and Eqs.~\eqref{eq:psievenoddNM}, the following relations hold:
\begin{equation}
\label{eq:IJN+M}
I^{(N+M)}=\frac12
(I^{(N)}\otimes I^{(M)}
-J^{(N)}\otimes J^{(M)}),
\qquad
J^{(N+M)}=\frac12
(I^{(N)}\otimes J^{(M)}
+J^{(N)}\otimes I^{(M)}),
\end{equation}
However, if we consider the action of these operators on the quotient space or, equivalently, the subspace of the canonical representatives, all of them are equivalent, i.e., act as $I_{F}$ and $J_{F}$ on the subspace of spanned by $\ket{\psi^{(N)}_{\rm even}}$ and $\ket{\psi^{(N)}_{\rm odd}}$: 
\begin{equation}
I^{(N+M)}\sim
I^{(N)}\otimes I^{(M)}
\sim-J^{(N)}\otimes J^{(M)}\sim I_{F},
\qquad 
J^{(N+M)}\sim
I^{(N)}\otimes J^{(M)}
\sim J^{(N)}\otimes I^{(M)}\sim J_{F}.
\end{equation}
In particular, if we consider, e.g., $M=1$, we arrive at the conclusion that $J^{(N)}$ (considered as the operator in the quotient space) can be implemented locally using $J^{(1)}=XZ$ acting on any flag qubit, where $X$ and $Z$ are the Pauli matrices. This was noticed in Ref.~\cite{McKague2009}. For two real operators $A$ and $B$ as given in Eq.~(\ref{eq:RPi}) of the main text, we will use the notation $A\otimes_FB$ [cf. Eqs.~\eqref{eq:otimesF}, \eqref{eq:otimesFpsiphiKett} and \eqref{eq:otimesFpsiphi}] to underline that we are interested in the action of the operator $A\otimes B$ only on the subspace spanned by the canonical representatives. One can think that
\begin{equation}
\label{eq:AotimesFB}
A\otimes_F B=
(A'\otimes B'-A''\otimes B'')\otimes I^{(2)}
+
(A'\otimes B''+A''\otimes B')\otimes J^{(2)}
\end{equation}
for $A=A'\otimes I^{(1)}+A''\otimes J^{(1)}$ and $B=B'\otimes I^{(1)}+B''\otimes J^{(1)}$, but if we act only on vectors from the subspace of the canonical representatives $\ket{\widetilde\psi}\otimes_F \ket{\widetilde\phi}$, then we can simply take $A\otimes B$ since its action on vector of this form is the same.

\section{Further properties of the construction}
\label{ap:cons_construction}
In this section we will prove some important properties of the described construction.

\begin{proposition} 
\label{prop:herm2sym}
${\mathcal{T}}$ maps Hermitian operators to real symmetric operators.
\end{proposition}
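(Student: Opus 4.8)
The plan is to work directly from the explicit form of the map established in Eq.~\eqref{eq:RPi}, namely $\mathcal{T}(A)=\mathrm{Re}(A)\otimes I_F+\mathrm{Im}(A)\otimes J_F$, and reduce the statement to two elementary observations: how Hermiticity of $A$ constrains the real and imaginary parts $\mathrm{Re}(A)$ and $\mathrm{Im}(A)$, and how transposition interacts with the fixed flag matrices $I_F$ and $J_F$.

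First I would record the consequences of $A=A^\dagger$ for the real-matrix decomposition $A=\mathrm{Re}(A)+i\,\mathrm{Im}(A)$. Since $A^\dagger=\mathrm{Re}(A)^T-i\,\mathrm{Im}(A)^T$, comparing real and imaginary parts of $A=A^\dagger$ yields that $\mathrm{Re}(A)$ is symmetric, $\mathrm{Re}(A)^T=\mathrm{Re}(A)$, while $\mathrm{Im}(A)$ is antisymmetric, $\mathrm{Im}(A)^T=-\mathrm{Im}(A)$. These two facts are the whole input from the Hermitian hypothesis.

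Next I would note the corresponding behaviour of the flag matrices from Eq.~\eqref{eq:defJ}: $I_F^T=I_F$ and $J_F^T=-J_F$, the latter because $J_F$ is itself antisymmetric. Then a direct computation of the transpose, using $(X\otimes Y)^T=X^T\otimes Y^T$ and real-linearity, gives
\begin{equation}
\mathcal{T}(A)^T=\mathrm{Re}(A)^T\otimes I_F^T+\mathrm{Im}(A)^T\otimes J_F^T
=\mathrm{Re}(A)\otimes I_F+\mathrm{Im}(A)\otimes J_F=\mathcal{T}(A),
\end{equation}
where in the middle step the sign flip from $J_F^T=-J_F$ is cancelled by the sign flip from $\mathrm{Im}(A)^T=-\mathrm{Im}(A)$. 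This shows $\mathcal{T}(A)$ is symmetric; that it is real follows immediately because $\mathrm{Re}(A)$, $\mathrm{Im}(A)$, $I_F$, and $J_F$ all have real entries, so the statement is proved.

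I do not expect a genuine obstacle here, since the result is a short computation once Eq.~\eqref{eq:RPi} is available: the only thing to be careful about is tracking the two independent sign changes (one from the antisymmetry of $\mathrm{Im}(A)$, one from the antisymmetry of $J_F$) and confirming they compensate rather than reinforce. If one prefers to avoid relying on the explicit form, an alternative would be to start from $\mathcal{T}(A)^T$, invoke the algebra homomorphism property $\mathcal{T}(A_1A_2)=\mathcal{T}(A_1)\mathcal{T}(A_2)$ together with $\mathcal{T}$ intertwining the $\dagger$-operation on $\mathbb{C}^d$ with transposition on $\mathcal{H}$ via Eq.~\eqref{eq:expvalues}, but the direct matrix computation is by far the cleanest route and I would adopt it.
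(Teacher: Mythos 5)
Your proof is correct and takes essentially the same route as the paper's: both transpose the explicit form $\mathcal{T}(A)=\mathrm{Re}(A)\otimes I_F+\mathrm{Im}(A)\otimes J_F$ and rely on the cancellation of the two sign flips coming from $\mathrm{Im}(A)^T=-\mathrm{Im}(A)$ and $J_F^T=-J_F$. The only cosmetic difference is that the paper writes the flag operators as $I^{(N)}$ and $J^{(N)}$ (the $N$-flag realizations of $I_F$ and $J_F$), which the paper itself identifies with your two-dimensional versions.
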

\begin{proof}
Let $H$ be a Hermitian operator. Then
\begin{equation}
\begin{aligned}
{\mathcal{T}}(H)^T=&\mathrm{Re}(H)^T\otimes I^{(N)}+\mathrm{Im}(H)^T\otimes(J^{(N)} ) ^T=\mathrm{Re}(H)\otimes I^{(N)}-\mathrm{Im}(H)\otimes(J^{(N)})^T\\
=&\mathrm{Re}(H)\otimes I^{(N)}+\mathrm{Im}(H)\otimes J^{(N)}=\mathcal{T}(H),
\end{aligned}
\end{equation}
where we have used  $(J)^T=-J$.
\end{proof}

\begin{proposition} ${\mathcal{T}}$ maps unitary operators to orthogonal operators. 
\end{proposition}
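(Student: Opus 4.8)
The plan is to reduce the claim to three ingredients already in hand: that $\mathcal{T}$ is multiplicative, that it sends the identity to the identity (on the relevant subspace), and that it intertwines Hermitian conjugation with transposition. Recall that an operator $O$ on a real Hilbert space is orthogonal precisely when $O^T O = O O^T = \mathds{1}$. Thus for a unitary $U$, which satisfies $U^\dagger U = U U^\dagger = \mathds{1}$, it suffices to show $\mathcal{T}(U)^T \mathcal{T}(U) = \mathcal{T}(U)\mathcal{T}(U)^T = \mathds{1}$.

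First I would establish the conjugation–transposition relation $\mathcal{T}(A^\dagger) = \mathcal{T}(A)^T$ for every $A\in\mathcal{L}(\mathbb{C}^d)$. This is essentially the computation already carried out in the proof of Proposition~\ref{prop:herm2sym}, but \emph{without} assuming $A=A^\dagger$. Starting from $\mathcal{T}(A)^T = \mathrm{Re}(A)^T\otimes (I^{(N)})^T + \mathrm{Im}(A)^T\otimes (J^{(N)})^T$ and using $(I^{(N)})^T = I^{(N)}$ together with $(J^{(N)})^T = -J^{(N)}$, one obtains $\mathcal{T}(A)^T = \mathrm{Re}(A)^T\otimes I^{(N)} - \mathrm{Im}(A)^T\otimes J^{(N)}$. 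On the other hand, $A^\dagger = \overline{A}^T$ gives $\mathrm{Re}(A^\dagger) = \mathrm{Re}(A)^T$ and $\mathrm{Im}(A^\dagger) = -\mathrm{Im}(A)^T$, so inserting into Eq.~(\ref{eq:RPi}) yields exactly the same expression. Hence $\mathcal{T}(A^\dagger) = \mathcal{T}(A)^T$, and Proposition~\ref{prop:herm2sym} is recovered as the special case $A^\dagger = A$.

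Next I would combine this with the already-established algebra homomorphism property $\mathcal{T}(A_1 A_2) = \mathcal{T}(A_1)\mathcal{T}(A_2)$ and with $\mathcal{T}(\mathds{1}) = \mathrm{Re}(\mathds{1})\otimes I^{(N)} + \mathrm{Im}(\mathds{1})\otimes J^{(N)} = \mathds{1}\otimes I^{(N)}$. Then
\begin{equation}
\mathcal{T}(U)^T \mathcal{T}(U) = \mathcal{T}(U^\dagger)\mathcal{T}(U) = \mathcal{T}(U^\dagger U) = \mathcal{T}(\mathds{1}),
\end{equation}
and symmetrically $\mathcal{T}(U)\mathcal{T}(U)^T = \mathcal{T}(U U^\dagger) = \mathcal{T}(\mathds{1})$, which is the desired conclusion as soon as $\mathcal{T}(\mathds{1})$ acts as the identity.

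The main obstacle is precisely this last point. In the single-party case $I^{(N)} = I_F$ is the full $2\times 2$ identity and $\mathcal{T}(\mathds{1}) = \mathds{1}$, so orthogonality holds literally. In the multipartite case, however, $I^{(N)} = \ket{\psi^{(N)}_{\mathrm{even}}}\bra{\psi^{(N)}_{\mathrm{even}}} + \ket{\psi^{(N)}_{\mathrm{odd}}}\bra{\psi^{(N)}_{\mathrm{odd}}}$ is only the projector onto the two-dimensional flag subspace spanned by the canonical representatives, not the identity on all of $(\mathbb{R}^2)^{\otimes N}$. I would therefore phrase the conclusion as orthogonality of $\mathcal{T}(U)$ restricted to the subspace of canonical representatives (equivalently, as an operator on the quotient space), where $I^{(N)}$ acts as $I_F$, consistent with Remark~\ref{rem:tworepresentations} and the discussion around Eq.~(\ref{eq:IJN+M}). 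One must also check that $\mathcal{T}(U)$ preserves this subspace, which follows from the observation that the relevant flag operators map canonical representatives to canonical representatives.
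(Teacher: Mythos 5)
Your proof is correct, but it follows a genuinely different route from the paper's. The paper argues by direct computation: it expands the unitarity relation $UU^\dagger=\mathds{1}$ into its real and imaginary parts, obtaining the two identities $\mathrm{Re}(U)\mathrm{Re}(U)^T+\mathrm{Im}(U)\mathrm{Im}(U)^T=\mathds{1}$ and $-\mathrm{Re}(U)\mathrm{Im}(U)^T+\mathrm{Im}(U)\mathrm{Re}(U)^T=0$, and then multiplies out $\mathcal{T}(U)\mathcal{T}(U)^T$ using the algebra $(J^{(N)})^T=-J^{(N)}$, $(J^{(N)})^2=-I^{(N)}$ to land on $\mathds{1}\otimes I^{(N)}$. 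You instead package the argument structurally: you first prove the $*$-compatibility $\mathcal{T}(A^\dagger)=\mathcal{T}(A)^T$ for arbitrary $A$ (of which Proposition~\ref{prop:herm2sym} becomes the special case $A=A^\dagger$), and then let the multiplicativity $\mathcal{T}(A_1A_2)=\mathcal{T}(A_1)\mathcal{T}(A_2)$, which the paper states in the main text, do the work: $\mathcal{T}(U)^T\mathcal{T}(U)=\mathcal{T}(U^\dagger U)=\mathcal{T}(\mathds{1})=\mathds{1}\otimes I^{(N)}$. The underlying algebra is the same, but your version is more modular: it exhibits $\mathcal{T}$ as a real $*$-homomorphism, from which preservation of unitaries, projectors, and normal operators all follow uniformly, at the price of leaning on the multiplicativity claim as an external input (its proof for the multipartite operators $I^{(N)},J^{(N)}$ requires the same relations the paper verifies inline, so the paper's computation is the more self-contained of the two). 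A further point in your favor: both arguments terminate at $\mathds{1}\otimes I^{(N)}$, which in the multipartite case is the identity only on the flag subspace spanned by $\ket{\psi^{(N)}_{\mathrm{even}}}$ and $\ket{\psi^{(N)}_{\mathrm{odd}}}$ (equivalently, the identity of the quotient space); the paper leaves this identification implicit, relying on the convention of Remark~\ref{rem:tworepresentations}, whereas you state the caveat explicitly and note that $\mathcal{T}(U)$ preserves the canonical-representative subspace, which is the honest formulation of orthogonality here.
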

\begin{proof}
Let $U$ be unitary. Then
\begin{equation}
\begin{aligned}
\mathds{1}&=UU^\dagger= [\mathrm{Re}(U)+i\mathrm{Im}(U)][\mathrm{Re}(U)^T-i\mathrm{Im}(U)^T]\\
&=\mathrm{Re}(U)\mathrm{Re}(U)^T+\mathrm{Im}(U)\mathrm{Im}(U)^T+i[-\mathrm{Re}(U)\mathrm{Im}(U)^T+\mathrm{Im}(U)\mathrm{Re}(U)^T],
\end{aligned}
\end{equation}
By comparison, then
\begin{align}
\label{eq16}
\mathrm{Re}(U)\mathrm{Re}(U)^T+\mathrm{Im}(U)\mathrm{Im}(U)^T&=\mathds{1},\\
\label{eq17}
-\mathrm{Re}(U)\mathrm{Im}(U)^T+\mathrm{Im}(U)\mathrm{Re}(U)^T&=0.
\end{align}
Now consider
\begin{equation}
\begin{aligned}
{\mathcal{T}}(U){\mathcal{T}}(U)^T&=[\mathrm{Re}(U)\otimes I^{(N)}+\mathrm{Im}(U)\otimes J^{(N)}][\mathrm{Re}(U)^T\otimes I^{(N)}-\mathrm{Im}(U)^T\otimes J^{(N)}]\\
&=
[\mathrm{Re}(U)\mathrm{Re}(U)^T+\mathrm{Im}(U)\mathrm{Im}(U)^T] \otimes I^{(N)}
+[-\mathrm{Re}(U)\mathrm{Im}(U)^T+\mathrm{Im}(U)\mathrm{Re}(U)^T]\otimes J^{(N)}\\&=
\mathds{1}\otimes I^{(N)},
\end{aligned}
\end{equation}
where we have used $(J^{(N)})^2=-I^{(N)}$.
\end{proof}

Let us now discuss scalar product and distinguishability of states.
In the following we will often use $\mathrm{Re}(ab) = \mathrm{Re}(a)\mathrm{Re}(b) - \mathrm{Im}(a)\mathrm{Im}(b)$ and $\mathrm{Im}(ab) = \mathrm{Re}(a)\mathrm{Im}(b) + \mathrm{Im}(a)\mathrm{Re}(b)$. 
The distinguishability of two states is related to the scalar product of two states: Given a state $\ket{\psi}$ how likely is it to measure $\ket{\phi}$? The probability is given by the Born rule, which in complex quantum mechanics for pure states reads $|\braket{\phi|\psi}|^2$.
Direct calculation shows that
\begin{align}
\nonumber
\mathcal{R}(\ket{\phi})^T \mathcal{R}(\ket{\psi}) &= 
[\mathrm{Re}(\ket{\phi}) \otimes \ket{\psi_\mathrm{even}^{(N)}} + \mathrm{Im}(\ket{\phi}) \otimes \ket{\psi_\mathrm{odd}^{(N)}}]^T [\mathrm{Re}(\ket{\psi}) \otimes \ket{\psi_\mathrm{even}^{(N)}} + \mathrm{Im}(\ket{\psi}] \otimes \ket{\psi_\mathrm{odd}^{(N)}}]  \\
\nonumber
&= [\mathrm{Re}(\bra{\phi}) \otimes \bra{\psi_\mathrm{even}^{(N)}} - \mathrm{Im}(\bra{\phi}) \otimes \bra{\psi_\mathrm{odd}^{(N)}}][\mathrm{Re}(\ket{\psi}) \otimes \ket{\psi_\mathrm{even}^{(N)}} + \mathrm{Im}(\ket{\psi}) \otimes \ket{\psi_\mathrm{odd}^{(N)}}]  \\
&= \mathrm{Re}\braket{\phi|\psi}, 
\label{eq:scalarprod}
\end{align}
i.e., the scalar product is not preserved. However, what is important for the Born rule is preservation of the probabilities of the outcomes and average values, e.g., expressions of the form $\braket{\psi|A|\psi}$, where $A$ is a Hermitian operator. Let us prove this property:
\begin{equation}
\label{eq:expvaluesR}
\mathcal R(\ket\psi)^T \mathcal{T}(H) \mathcal R(\ket\psi)=
\braket{\psi|H|\psi}.
\end{equation}
Let us first prove that 
\begin{equation}
\label{eq:TARpsi}
\mathcal T(A)\mathcal R(\ket\psi)
=
\mathcal R(A\ket\psi)
\end{equation}
for an arbitrary complex linear operator $A$ and complex vector $\ket{\psi}$. Actually, in the construction in Section A of the map $\mathcal T$, this property was postulated, see Proposition~\ref{prop:uniqueops}. Let us now prove it in the framework of the alternative approach developed in Section C.
If Eq.~(\ref{eq:TARpsi}) is true, then Eq.~(\ref{eq:expvaluesR}) is a consequence of it along with \eqref{eq:scalarprod} and the fact that $\braket{\psi|H|\psi}$ is real for a Hermitian $H$. We have
\begin{equation}
\begin{split}
\mathcal T(A)\mathcal R(\ket\psi)
&=
[{\rm Re}(A)\otimes I^{(N)}
+{\rm Im}(A)\otimes J^{(N)}]
[{\rm Re}(\ket\psi)\otimes \ket{\psi_{\rm even}^{(N)}}
+{\rm Im}(\ket\psi)\otimes \ket{\psi_{\rm odd}^{(N)}}]
\\
&=
[{\rm Re}(A){\rm Re}(\ket\psi)
-
{\rm Im}(A){\rm Im}(\ket\psi)]
\otimes
\ket{\psi_{\rm even}^{(N)}}
+
[{\rm Re}(A){\rm Im}(\ket\psi)
+
{\rm Im}(A){\rm Re}(\ket\psi)]
\otimes
\ket{\psi_{\rm odd}^{(N)}}
\\
&={\rm Re}(A\ket\psi)
\otimes
\ket{\psi_{\rm even}^{(N)}}
+
{\rm Im}(A\ket\psi)]
\otimes
\ket{\psi_{\rm odd}^{(N)}}
\\
&=\mathcal R(A\ket\psi),
\end{split}
\end{equation}
i.e., the expectation values are preserved under the complex-to-real map.

\section{Formalization of Postulate (P4)}
\label{ap:pos_p4}

In this section we will provide a mathematical formulation of postulate (P4) and prove that indeed our construction fulfills it. First, we need to introduce some definitions about independent (or local) preparations of states and operations.

\begin{definition}[Independent preparation]
\label{def:indep-preparation}        

Let $\mathcal{H}_1$,\ldots,$\mathcal{H}_N$ be Hilbert spaces over a number field $\mathbb{F}$. We define an \textit{independent preparation} as a multi-linear embedding
\begin{equation}
\label{def:indep-prep}
\begin{aligned}
\xi_\mathbb{F}: \mathcal{H}_1\times\cdots\times\mathcal{H}_N&\to\mathcal{H},\\
(\ket{\psi_1},\ldots,\ket{\psi_N})&\mapsto\xi_\mathbb{F}(\ket{\psi_1},\ldots,\ket{\psi_N})
\end{aligned}
\end{equation}
\end{definition}

Note that, in the case $\mathbb{F}=\mathbb{C}$, i.e., in complex quantum mechanics, the embedding
\begin{equation}
\begin{aligned}
\xi_\mathbb{C}:\mathcal{H}_1\times\cdots\times\mathcal{H}_N&\to\mathcal{H}_1\otimes\cdots\otimes\mathcal{H}_N,\\(\ket{\psi_1},\ldots,\ket{\psi_N})&\mapsto\ket{\psi_1}\otimes\cdots\otimes\ket{\psi_N},
\end{aligned}
\end{equation}
corresponds to the standard way of formulating independent preparation -- the tensor product of states. Hence, Definition~\ref{def:indep-preparation} is indeed a generalization of independent preparation. Such generalization is also used in Ref.~\cite{4posare3}. Note that our construction does not contradict the results of Ref.~\cite{4posare3}, where it is shown that the tensor product of complex Hilbert spaces of the subsystems is the only possibility for the Hilbert space of a composite system under certain natural physical requirements. The analysis of Ref.~\cite{4posare3} essentially relies on the fact that orthogonal vectors are always distinguishable. As we mentioned previously, this is an inapplicable assumption for our real formulation of quantum mechanics as in our case the flag rotations do not change the physical state. 

\begin{definition}[Local operation]
\label{def:loc-op}          

Let $\mathcal{H}_1, \ldots ,\mathcal{H}_N$ be (finite-dimensional) Hilbert spaces over a field $\mathbb{F}$ and let $\mathcal{H}$ be the Hilbert space of the composite system over the same field $\mathbb{F}$. Let $\mathcal{L}(\mathcal{H}_i)$ be the space of linear operators in Hilbert space $\mathcal{H}_i$ and $A_i\in\mathcal{L}(\mathcal{H}_i)$. A \textit{local operation} acting on the subsystem associated with $\mathcal{H}_i$ and acting trivially everywhere else, is an embedding
\begin{equation}
\begin{aligned}
\eta_\mathbb{F}^i:\mathcal{L}(\mathcal{H}_i)&\to\mathcal{L}(\mathcal{H}),\\
A_i&\mapsto\eta_\mathbb{F}^i(A_i).
\end{aligned}
\end{equation}
\end{definition}

Note that, in the case $\mathbb{F}=\mathbb{C}$, i.e., in complex quantum mechanics, the embedding is
\begin{equation}
\begin{aligned}
\eta_\mathbb{C}^i:\mathcal{L}(\mathcal{H}_i)&\to\mathcal{L}(\mathcal{H}),\\
A_i&\mapsto\mathds{1}_1\otimes\cdots\otimes\mathds{1}_{i-1}\otimes A_i\otimes\mathds{1}_{i+1}\otimes\cdots\otimes\mathds{1}_N.
\end{aligned}
\end{equation}

\begin{definition}[Formal definition of postulate (P4)]
\label{def:pos-p4}      

For simplicity, we will consider a bipartite system, the generalization is straightforward.      Let $\mathcal{H}_1$ and $\mathcal{H}_2$ be Hilbert spaces over the field $\mathbb{F}$. Let $\mathcal{L}(\mathcal{H}_i)$ be the space of linear operators on the space $\mathcal{H}_i$. Let $A_i\in\mathcal{L}(\mathcal{H}_i)$, $\ket{\psi_1}\in\mathcal{H}_1$ and $\ket{\psi_2}\in\mathcal{H}_2$. Postulate (P4) means:
\begin{equation}
\label{eq:math-p4}
\begin{split}
\eta_{\mathbb F}^1(A_1)\xi_{\mathbb F}(\ket{\psi_1},\ket{\psi_2})&=\xi_{\mathbb F}(A_1\ket{\psi_1},\ket{\psi_2}),\\
\eta_{\mathbb F}^2(A_2)\xi_{\mathbb F}(\ket{\psi_1},\ket{\psi_2})&=\xi_{\mathbb F}(\ket{\psi_1},A_2\ket{\psi_2}).
\end{split}
\end{equation}
\end{definition}

Note that Eq. \eqref{eq:math-p4} in complex quantum mechanics reduces to
\begin{equation}
(A_1\otimes\mathds{1}_2)(\ket{\psi_1}\otimes\ket{\psi_2})=(A_1\ket{\psi_1})\otimes\ket{\psi_2},
\qquad
(\mathds{1}_1\otimes A_2)(\ket{\psi_1}\otimes\ket{\psi_2})=\ket{\psi_1}\otimes(A_2\ket{\psi_2}).
\end{equation}
Thus, complex quantum mechanics obeys postulate (P4). Therefore, Eq. \eqref{eq:math-p4} is indeed a generalization of the physical implication of the tensor product structure postulate. Namely, ``the operators used to describe measurements or transformations of system $A$ act trivially on $B$ and vice versa''.

The embeddings $\xi_\mathbb{R}$ and $\eta_\mathbb{R}^i$ in the real-valued formalism were actually defined in the main text and in sections B and C: $\xi_{\mathbb R}(\ket{\widetilde\psi},\ket{\widetilde\phi})=\ket{\widetilde\psi}\otimes_F\ket{\widetilde\phi}$, $\eta^1_{\mathbb R}(A)=A_1\otimes_F\mathds 1_2$, and $\eta^2_{\mathbb R}(A)=\mathds 1_1\otimes_F A_2$ for real vectors $\widetilde\psi$ and $\widetilde\phi$ and real operators $A_1$ and $A_2$, see Eqs.~(\ref{eq:otimesF}), (\ref{eq:otimesFpsiphi}), and (\ref{eq:AotimesFB}). Here $\mathds 1$ is the identity operator in the corresponding Hilbert space. The generalization to the multipartite case is obvious.   
These embeddings obey the following important properties:
\begin{equation}
\label{eq:def-indep-prep}
\xi_\mathbb{R}\colon\bigl(\mathcal{R}(\ket{\psi_1}),\ldots,\mathcal{R}(\ket{\psi_N})\bigr)\mapsto\mathcal{R}(\ket{\psi_1}\otimes\cdots\otimes\ket{\psi_N})
\end{equation}
for arbitrary complex vectors $\ket{\psi_1}$,\ldots, $\ket{\psi_N}$ and
\begin{equation}
\label{eq:def-loc-op}\eta_\mathbb{R}^i\colon{\mathcal{T}}(A_i)\mapsto{\mathcal{T}}(\mathds{1}_1\otimes\cdots\otimes\mathds{1}_{i-1}\otimes A_i\otimes\mathds{1}_{i+1}\otimes\cdots\otimes\mathds{1}_N)
\end{equation}
for any operator $A_i$ acting on a complex Hilbert space. Equalities (\ref{eq:def-indep-prep}) and (\ref{eq:def-loc-op}) give rise to the commutative diagrams depicted in Figs.~\ref{fig:com_diag_embed_states} and \ref{fig:com_diag_embed_op} showing how the complex and real pictures are related.

\begin{figure}[H]
\centering
\includegraphics{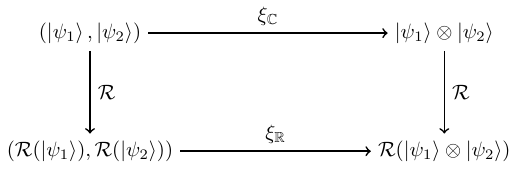}
\caption{Commutative diagram of the embeddings $\xi_\mathbb{C}$, $\xi_\mathbb{R}$, and map $\mathcal{R}$ for a bipartite system. Note that, by construction, $\mathcal{R}$ is an isomorphism and thus invertible. This allows us to move from the complex picture to the real as $\xi_\mathbb{C}=\mathcal{R}^{-1}\circ\xi_\mathbb{R}\circ\mathcal{R}$.}\label{fig:com_diag_embed_states}
\end{figure}
\begin{figure}[H]
\centering
\includegraphics{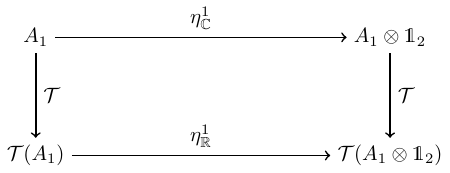}
\caption{Commutative diagram of the embeddings $\eta_\mathbb{C}^1$, $\eta_\mathbb{R}^1$, and map $\mathcal{T}$ for a bipartite system. Note that, by construction, $\mathcal{T}$ is an isomorphism and thus invertible.
This allows us to move from the complex picture to the real as $\eta_\mathbb{C}^1=\mathcal{T}^{-1}\circ\eta_\mathbb{R}^1\circ\mathcal{T}$.}\label{fig:com_diag_embed_op}
\end{figure}

Since $\mathcal T$ preserves the products of operators,
\begin{equation}
\label{eq:bipartite-prop}\eta_\mathbb{R}^1(\mathcal T(A_1))\eta_\mathbb{R}^2(\mathcal T(A_2))={\mathcal{T}}(A_1\otimes\mathds{1}_2){\mathcal{T}}(\mathds{1}_1\otimes A_2)={\mathcal{T}}(A_1\otimes A_2).
\end{equation}
For any two real operators $\widetilde A_1$ and $\widetilde A_2$ (generalization to the multipartite case is straightforward),
\begin{equation}
\eta_\mathbb{R}^1(\widetilde A_1)\eta_\mathbb{R}^2(\widetilde A_2)=
\widetilde A_1\otimes_F\widetilde A_2.
\end{equation}

\begin{proposition}
Postulate (P4) in form of Eq.~(\ref{eq:math-p4}) is fulfilled.
\end{proposition}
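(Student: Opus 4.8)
The plan is to reduce the real-valued identity to the (trivially true) complex one by transporting everything through the isomorphisms $\mathcal R$ and $\mathcal T$, exploiting the two commutative diagrams in Figs.~\ref{fig:com_diag_embed_states} and \ref{fig:com_diag_embed_op} together with the intertwining relation Eq.~\eqref{eq:TARpsi}. Since $\mathcal R$ and $\mathcal T$ are isomorphisms by construction, I would first note that any real state and any real operator can be written as images of complex objects: $\ket{\widetilde\psi_1}=\mathcal R(\ket{\psi_1})$, $\ket{\widetilde\psi_2}=\mathcal R(\ket{\psi_2})$ and $\widetilde A_1=\mathcal T(A_1)$. The goal is then to verify the first line of Eq.~\eqref{eq:math-p4} for $\mathbb F=\mathbb R$; the second line follows by an entirely symmetric argument with the roles of the two subsystems exchanged.

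For the left-hand side I would use the defining properties of the embeddings, Eqs.~\eqref{eq:def-indep-prep} and \eqref{eq:def-loc-op}, to obtain $\xi_{\mathbb R}(\ket{\widetilde\psi_1},\ket{\widetilde\psi_2})=\mathcal R(\ket{\psi_1}\otimes\ket{\psi_2})$ and $\eta_{\mathbb R}^1(\widetilde A_1)=\mathcal T(A_1\otimes\mathds 1_2)$. Applying the intertwining relation Eq.~\eqref{eq:TARpsi} with the operator $A_1\otimes\mathds 1_2$ and the vector $\ket{\psi_1}\otimes\ket{\psi_2}$ gives
\begin{equation}
\eta_{\mathbb R}^1(\widetilde A_1)\,\xi_{\mathbb R}(\ket{\widetilde\psi_1},\ket{\widetilde\psi_2})
=\mathcal T(A_1\otimes\mathds 1_2)\,\mathcal R(\ket{\psi_1}\otimes\ket{\psi_2})
=\mathcal R\bigl((A_1\otimes\mathds 1_2)(\ket{\psi_1}\otimes\ket{\psi_2})\bigr)
=\mathcal R\bigl((A_1\ket{\psi_1})\otimes\ket{\psi_2}\bigr),
\end{equation}
where the last equality is just postulate (P4) in the complex case, i.e.\ the elementary tensor-product identity. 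For the right-hand side I would again invoke Eq.~\eqref{eq:TARpsi}, now in the form $\widetilde A_1\ket{\widetilde\psi_1}=\mathcal T(A_1)\mathcal R(\ket{\psi_1})=\mathcal R(A_1\ket{\psi_1})$, and then Eq.~\eqref{eq:def-indep-prep} once more, to get $\xi_{\mathbb R}(\widetilde A_1\ket{\widetilde\psi_1},\ket{\widetilde\psi_2})=\mathcal R\bigl((A_1\ket{\psi_1})\otimes\ket{\psi_2}\bigr)$. Both sides coincide, which establishes the claim.

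I do not expect a genuine obstacle here: the proof is essentially a chaining of the commutative diagrams with Eq.~\eqref{eq:TARpsi}, so all the real analytic content was already absorbed into establishing that intertwining relation and the isomorphism property of $\mathcal T$ and $\mathcal R$. The only point demanding care is bookkeeping, namely checking that each map is applied to an element of its proper domain, that $\eta_{\mathbb R}^1$ and $\xi_{\mathbb R}$ act on the images of $\mathcal T$ and $\mathcal R$ as prescribed, and that the complex identity $(A_1\otimes\mathds 1_2)(\ket{\psi_1}\otimes\ket{\psi_2})=(A_1\ket{\psi_1})\otimes\ket{\psi_2}$ is what closes the loop. Having verified the first line, the second line is obtained by replacing $\eta_{\mathbb R}^1$ with $\eta_{\mathbb R}^2$ and $A_1\otimes\mathds 1_2$ with $\mathds 1_1\otimes A_2$ throughout, which completes the proof of Eq.~\eqref{eq:math-p4}.
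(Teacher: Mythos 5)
Your proposal is correct and follows essentially the same route as the paper's own proof: both reduce the real-valued identity to the complex tensor-product identity by writing the real objects as $\mathcal R(\ket{\psi_1})$, $\mathcal R(\ket{\psi_2})$, $\mathcal T(A_1)$, substituting the embedding definitions of Eqs.~(\ref{eq:def-indep-prep}) and (\ref{eq:def-loc-op}), and invoking the intertwining relation Eq.~(\ref{eq:TARpsi}) on both sides. The only cosmetic difference is that you spell out the surjectivity of $\mathcal R$ and $\mathcal T$ and the application of Eq.~(\ref{eq:TARpsi}) to the right-hand side explicitly, which the paper leaves implicit.
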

\begin{proof}

Consider the first expression in definition of Eq.~(\ref{eq:math-p4}) (the second one is analysed analogously) for real vectors $\mathcal R(\ket{\psi_1})$ and $\mathcal R(\ket{\psi_2})$ and a real operator $\mathcal T(A_1)$, where $\ket{\psi_1}$, $\ket{\psi_2}$ and $A_1$ are arbitrary complex vectors and an operator, respectively:

\begin{equation}
\label{eq:math-p4TR}
\eta_{\mathbb R}^1\big(\mathcal T (A_1)\big)\xi_{\mathbb R}\big(\mathcal R(\ket{\psi_1}),\mathcal R (\ket{\psi_2})\big)=\xi_{\mathbb R}\big(\mathcal T(A_1)\mathcal R(\ket{\psi_1}),\mathcal R(\ket{\psi_2})\big),\\
\end{equation}
Application of Eq.~(\ref{eq:TARpsi}) to the right-hand side of Eq.~\eqref{eq:math-p4TR} and substitution of embeddings \eqref{eq:def-indep-prep} and \eqref{eq:def-loc-op} into Eq.~\eqref{eq:math-p4TR} yields
\begin{equation}
\label{eq:our-math-p4}
\mathcal{T}(A_1\otimes\mathds{1}_2)\mathcal{R}(\ket{\psi_1}\otimes\ket{\psi_2})=\mathcal{R}(A_1\ket{\psi_1}\otimes\ket{\psi_2}).
\end{equation}
Thus, we need to prove Eq.~\eqref{eq:our-math-p4}. But this equality has already been proved for the general case (not necessarily a composite system), see again Eq.~(\ref{eq:TARpsi}). 
\end{proof}

\section{Density operators and reduced density operators in the real-valued formalism}
\label{ap:partial_trace}

Application of Eq.~(\ref{eq:RPi}) in the main text or Eq.~(\ref{eq:Roperator}) in the Supplementary Material to a complex-valued density operator $\rho$ gives the corresponding operator acting in a real Hilbert space:
\begin{equation}
\label{eq:Trho}
\mathcal T(\rho)={\rm Re}(\rho)\otimes I+{\rm Im}(\rho)\otimes J.
\end{equation}
According to Proposition~\ref{prop:herm2sym}, $\mathcal T(\rho)$ is a real symmetric operator, but the trace of $\mathcal T(\rho)$ is not one  as ${\rm tr}(I)=2$, ${\rm tr}(J)=0$, and ${\rm tr}[{\rm Re}(\rho)]={\rm Re}[{\rm tr}(\rho)]=1$.
In principle, although both observables and mixed states are operators, they are physically different objects (Especially, this is seen in the infinite-dimensional case, where a density operator must be a trace-class operator, in contrast to operators representing observables.) and can therefore be processed differently. Namely, for the density operators (mixed states), we can define the map (Note that in the end of Section C, we noticed that canonical representatives are defined up to a factor. Here, the subscript 1 refers to the property of trace preservation.)
\begin{equation}
\label{eq:T1rho}
{\mathcal T_1}(\rho)=\frac12{\rm Re}(\rho)\otimes I+\frac12{\rm Im}(\rho)\otimes J.
\end{equation}
This definition guarantees preservation of the trace, the relation ${\mathcal T}_1(\rho A)={\mathcal T}_1(\rho)\mathcal T(A)$, and, thus,
\begin{equation}
{\rm tr}(\rho A)=
{\rm tr}\big[
{\mathcal T}_1(\rho)
\mathcal T(A)
\big],
\end{equation}
i.e., invariance of the expectation values under the complex-to-real transformation, a generalization of Eq.~(\ref{eq:expvalues}) in the main text to the case of mixed states.

Now we will introduce a definition of reduced density operator in the real-valued formalism and check that it corresponds to the (real) partial trace. For simplicity, we will consider a bipartite system, the generalization is straightforward. 

\begin{definition}[Reduced density operator]
\label{def:partial_trace}
Let $\mathcal{H}_1$ and $\mathcal{H}_2$ be Hilbert spaces over a number field $\mathbb{F}$ corresponding to two quantum systems and $\mathcal H$ be the Hilbert space corresponding to the composite system. Let $A_2\in\mathcal{L}(\mathcal{H}_2)$ and consider a density operator in the composite system Hilbert space $\rho_{12}\in\mathcal{B}(\mathcal{H})$. Then $\rho_2$ is the reduced density operator of the system $\mathcal H_2$ if

\begin{equation}
\label{eq:reduceddensop}
\mathrm{tr}[\rho_{12}\eta_\mathbb{F}^2(A_2)]=\mathrm{tr}(\rho_2A_2),
\end{equation}
where  $\eta_\mathbb{F}^2$ was introduced in Definition \ref{def:loc-op}.
\end{definition}

For the complex case, $\mathcal H=\mathcal H_1\otimes\mathcal H_2$, $\eta^2_{\mathbb F}(A_2)=\mathds{1}\otimes A_2$, and it is well-known that the reduced density operator is given by the partial trace operation: $\rho_2={\rm tr}_1(\rho_{12})$, where the partial trace can be defined by

\begin{equation} 
\label{eq:parttr} \mathrm{tr}_1(\sigma_1\otimes\sigma_2)=\mathrm{tr}(\sigma_1)\sigma_2
\end{equation}
for arbitrary linear operators $\sigma_1$ and $\sigma_2$, with the continuation on arbitrary operators on $\mathcal H_1\otimes \mathcal H_2$ by linearity. We will show that the reduced operator in the real case is also given by the partial trace. Eq.~(\ref{eq:parttr}) does not depend on the number field, but in order for the partial trace to be well-defined in our construction, we need the representation of Eq.~(\ref{eq:Roperator}) for the real-valued operators for a composite system, which involves canonical representatives $I^{(N)}$ and $J^{(N)}$ of the equivalence classes from the tensor product of the flags of different subsystems (rather than Eq.~(\ref{eq:RPi}) from the main text, where the tensor-product structure of the two-dimensional flag operators $I_F$ and $J_F$ is not revealed).

We will show that the diagram depicted in Figure~\ref{fig:diag-ros} is commutative.

\begin{figure}[H]
\centering
\includegraphics{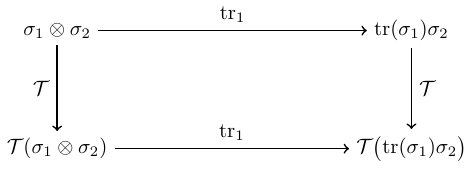}
\caption{Commutative diagram of the maps $\mathrm{tr}_1$ and $\mathcal{T}$. Here $\sigma_1$ and $\sigma_2$ are arbitrary complex linear operators.}
\label{fig:diag-ros}
\end{figure}
In this commutative diagram, we start with a density operator for the composite system in the complex case in the upper left part. We can obtain the reduced density operator in the real case in two ways. We can first map the joint density operator of two systems to the corresponding real-valued joint density operator and then apply the map to the reduced density operator in the real case. Or we can first apply the (complex) partial trace to obtain the reduced density operator for the complex case and then map it to the real-valued reduced density operator. The two results must be the same. This is true if the reduced density operator is given by the partial trace also in the real case. In short, ${\mathcal T}_1$ and the partial trace commute. For simplicity of notations, we use $\mathcal T$ instead of ${\mathcal T}_1$ since the normalization constant is unimportant here. 

Let us prove this commutative relation. We have (see Eq. \eqref{eq:IJN})
\begin{equation}
\mathcal T(\sigma_1\otimes\sigma_2)=
{\rm Re}(\sigma_1\otimes\sigma_2)\otimes I^{(2)}
+
{\rm Im}(\sigma_1\otimes\sigma_2)\otimes J^{(2)}.
\end{equation}
Since taking the real or imaginary part obviously commutes with the trace, 
\begin{equation}
{\rm tr}_1[\mathcal T(\sigma_1\otimes\sigma_2)]=
{\rm Re}[{\rm tr}(\sigma_1)\sigma_2]\otimes {\rm tr}_1(I^{(2)})
+
{\rm Im}[{\rm tr}(\sigma_1)\sigma_2]\otimes {\rm tr}_1(J^{(2)}).
\end{equation}
From the other side,
\begin{equation}
\mathcal T\big({\rm tr}(\sigma_1)\sigma_2\big)
=
{\rm Re}[{\rm tr}(\sigma_1)\sigma_2]\otimes I^{(1)}
+
{\rm Im}[{\rm tr}(\sigma_1)\sigma_2]\otimes J^{(1)}.     
\end{equation}
Hence, we need to prove that ${\rm tr}_1(I^{(2)})=I^{(1)}$ and ${\rm tr}_1(J^{(2)})=J^{(1)}$. Let us prove more general relations:

\begin{equation}
\label{eq:trIJ}
\mathrm{tr}_{1,\ldots,M}(I^{(N)})=I^{(N-M)},
\qquad
\mathrm{tr}_{1,\ldots,M}(J^{(N)})=J^{(N-M)},
\end{equation}
where $\mathrm{tr}_{1,\ldots,M}$ denotes the partial trace with respect to subsystems $1,\ldots,M$. We have 

\begin{align}
&\mathrm{tr}_i(\ket{\psi_{\mathrm{even}}^{(N)}}\bra{\psi_{\mathrm{even}}^{(N)}})
\nonumber
\\
&=\frac{1}{2^{N-1}}\sum_{\substack{k_1,\ldots,k_N=0
\\
\mathrm{even}\,|k|_{\rm H}\ }}\sum_{\substack{l_1,\ldots,l_N=0
\\
\mathrm{even}\,|l|_{\rm H} }}\sum_{\gamma=0}^1(-1)^{\frac{|k|_{\rm H}+|l|_{\rm H}}{2}}\ket{k_1,\ldots,k_{i-1}k_{i+1},\ldots,k_N}\bra{l_1,\ldots,l_{i-1}l_{i+1},\ldots,l_N}\braket{\gamma| k_i}\braket{l_i| \gamma}
\nonumber
\\
&=\frac{1}{2}(\ket{\psi_{\mathrm{even}}^{(N-1)}}\bra{\psi_{\mathrm{even}}^{(N-1)}}+\ket{\psi_{\mathrm{odd}}^{(N-1)}}\bra{\psi_{\mathrm{odd}}^{(N-1)}})=\frac{1}{2}I^{(N-1)},		
\end{align}
and similarly
\begin{align}
&\mathrm{tr}_i(\ket{\psi_{\mathrm{odd}}^{(N)}}\bra{\psi_{\mathrm{odd}}^{(N)}})
\nonumber
\\
&=\frac{1}{2^{N-1}}\sum_{\substack{k_1,\ldots,k_N=0 \\\mathrm{odd}\,|k|_{\rm H} }}^1\sum_{\substack{l_1,\ldots,l_N=0\\\mathrm{odd}\,|l|_{\rm H} }}^1\sum_{\gamma=0}^1(-1)^{\frac{|k|_{\rm H}-1+|l|_{\rm H}-1}{2}}\ket{k_1,\ldots,k_{i-1}k_{i+1},\ldots,k_N}\bra{l_1,\ldots,l_{i-1}l_{i+1},\ldots,l_N}\cdot\braket{\gamma| k_i}\braket{l_i| \gamma}
\nonumber
\\
&=\frac{1}{2}(\ket{\psi_{\mathrm{even}}^{(N-1)}}\bra{\psi_{\mathrm{even}}^{(N-1)}}+\ket{\psi_{\mathrm{odd}}^{(N-1)}}\bra{\psi_{\mathrm{odd}}^{(N-1)}})=\frac{1}{2}I^{(N-1)}.
\end{align}
Thus, by the definition of $I^{(N)}$ (see Eq. \eqref{eq:IJN}), we obtain the first equality in Eqs.~(\ref{eq:trIJ}). Analogously,

\begin{align}
& \mathrm{tr}_i(\ket{\psi_{\mathrm{odd}}^{(N)}}\bra{\psi_{\mathrm{even}}^{(N)}})
\nonumber \\
& =\frac{1}{2^{N-1}}\sum_{\substack{k_1,\ldots,k_N=0 \\\mathrm{odd}\,|k|_{\rm H} }}^1\sum_{\substack{l_1,\ldots,l_N=0\\\mathrm{even}\,|l|_{\rm H} }}^1\sum_{\gamma=0}^1(-1)^{\frac{|k|_{\rm H}-1+|l|_{\rm H}}{2}}\ket{k_1,\ldots,k_{i-1}k_{i+1},\ldots,k_N}\bra{l_1,\ldots,l_{i-1}l_{i+1},\ldots,l_N}\cdot\braket{\gamma| k_i}\braket{l_i| \gamma}
\nonumber \\
& =\frac{1}{2}(-\ket{\psi_{\mathrm{even}}^{(N-1)}}\bra{\psi_{\mathrm{odd}}^{(N-1)}}+\ket{\psi_{\mathrm{odd}}^{(N-1)}}\bra{\psi_{\mathrm{even}}^{(N-1)}})=\frac{1}{2}J^{(N-1)}.
\end{align}
Thus, by the definition of $J^{(N)}$ (see Eq. \eqref{eq:IJN}), we obtain the second equality in Eqs. \eqref{eq:trIJ}. This concludes the proof of the commutative diagram from Figure~\ref{fig:diag-ros}. 

\begin{remark}
In order to determine whether a pure state is entangled based on its real-valued form, one proceeds as in complex quantum mechanics, i.e. tracing out one subsystem and checking whether the reduced state is mixed or not. However, there is a subtle detail when handling density operators, as discussed in this section. Namely, when applying the map to density operators in order 
to obtain its real version, one has to include a factor of $\frac 12$ in order to preserve normalization (see Eq. \eqref{eq:T1rho}).

Therefore, given a real-valued density operator $\Tilde{\rho}$, if after tracing out one subsystem (for example $A$) the reduced state is pure (which, due to the factor of $\frac 12$, will translate into the condition $(\mathrm{tr}_A(\Tilde{\rho}))^2=\frac 12 \mathrm{tr}_A(\Tilde{\rho}) $), we conclude that the state $\Tilde{\rho}$ is separable. Consequently, we conclude that the state is entangled if the condition above does not hold.
\end{remark}

\section{Myrheim approach}
\label{ap:myrheim}

In this section, we compare our approach to the one of Ref.~\cite{Myrheim}. According to this approach, in our notations, the complex Hilbert space $\mathbb C^d$ also corresponds to the real Hilbert space $\mathbb R^d\otimes\mathbb R_F^2$. Consider now two real Hilbert spaces $\mathcal H_A=\mathbb R^{d_A}\otimes\mathbb R_{F_a}^2$ and $\mathcal H_B=\mathbb R^{d_B}\otimes\mathbb R_{F_b}^2$ and the tensor product for the composite system:

\begin{equation}
\mathcal{H}=\mathcal{H}_A\otimes\mathcal{H}_B.
\end{equation}
The following operators are introduced:
\begin{equation}
J_{A(B)}=\mathds{1}_{a(b)}\otimes         XZ_{F_a(F_b)}
\end{equation}
where subscript $a$ $(b)$ denote the ``main'' part of the real Hilbert space, i.e., $\mathbb R^{d_A}$ ($\mathbb R^{d_B}$), and $X$ and $Z$ are the Pauli matrices. 
This Hilbert space is decomposed as
\begin{equation}
\mathcal{H}=\mathcal{H}_+\oplus\mathcal{H}_-,
\end{equation}
where $\mathcal{H}_\pm=P_\pm \mathcal{H}$ and
\begin{equation}
P_\pm=\frac{1}{2}(\mathds{1}_{A}\otimes\mathds{1}_B\mp J_{A}\otimes J_{B}).
\end{equation}
Here the capital subscripts $A$ and $B$ refer to the whole spaces $\mathcal H_A$ and $\mathcal H_B$, i.e., including the flags.
It is argued that the complex tensor product space can be naturally identified with $\mathcal{H}_+$.
Explicitly, 
\begin{equation}
\begin{aligned}
P_+&=\frac{1}{2}\bigl(\mathds{1}_a\otimes\mathds{1}_{F_a}\otimes\mathds{1}_b\otimes\mathds{1}_{F_b}-\mathds{1}_a\otimes XZ_{F_a}\otimes\mathds{1}_b \otimes XZ_{F_b}\bigr)\\
&=\mathds{1}_a\otimes\mathds{1}_b\otimes\bigl(\ket{\psi_{\mathrm{even}}^{(2)}}\bra{\psi_{\mathrm{even}}^{(2)}}_{F_aF_b}+\ket{\psi_{\mathrm{odd}}^{(2)}}\bra{\psi_{\mathrm{odd}}^{(2)}}_{F_aF_b}\bigr).
\end{aligned}
\end{equation}
This shows that $P_+$ exactly corresponds to $P^\perp$, see Eq.~\eqref{eq:Pperp} in Section B in the bipartite scenario. 

In order to generalize to multipartite scenarios, a real Hilbert space of the form 
\begin{equation}
\mathcal{H}=\mathcal{H}_A\otimes\mathcal{H}_B\otimes\mathcal{H}_C,
\end{equation}
is considered. In addition, defined in analogy to $P_\pm$ is
\begin{equation}
Q_\pm=\frac{1}{2}(\mathds{1}_{A}\otimes\mathds{1}_C\mp J_{A}\otimes J_{C}).
\end{equation}
Then the physical subspace identified with the complex tensor product is $\mathcal{H}_{++}$ defined as
\begin{equation}
\mathcal{H}_{++}=P_+Q_+\mathcal{H}.
\end{equation}
Explicit computation shows that
\begin{equation}
P_+Q_+=\mathds{1}_a\otimes\mathds{1}_b\otimes\mathds{1}_c\otimes\bigl(\ket{\psi_{\mathrm{even}}^{(3)}}\bra{\psi_{\mathrm{even}}^{(3)}}_{F_aF_bF_c}+\ket{\psi_{\mathrm{odd}}^{(3)}}\bra{\psi_{\mathrm{odd}}^{(3)}}_{F_aF_bF_c}\bigr).
\end{equation}
This once again coincides with $P^\perp$ in the tripartite scenario. The generalization to an arbitrary number of subsystems is straightforward.

Myrheim's approach is similar to the one presented in this paper in that it also considers the tensor product of real Hilbert spaces. However, the key distinction lies in the treatment of the resulting space: Myrheim's approach involves restricting to a subspace, whereas this work considers the quotient space of the tensor product. This modification is crucial to preserve postulate (P1), which states that a pure state of a quantum system is represented by an arbitrary unit vector in a given Hilbert space, rather than a vector only from a subspace.

\section{Explicit calculation for Renou et al.'s model}
\label{ap:renou_experiment}
In this section we are going to review a model for an experiment proposed by Renou et al. in Ref.~\cite{Renou2021} and then describe it within the framework of quantum mechanics based on real numbers. We will see how, within this formalism, the experiment statistics from complex quantum mechanics are recovered.

\begin{figure}[h]
\centering
\includegraphics[scale=.6]{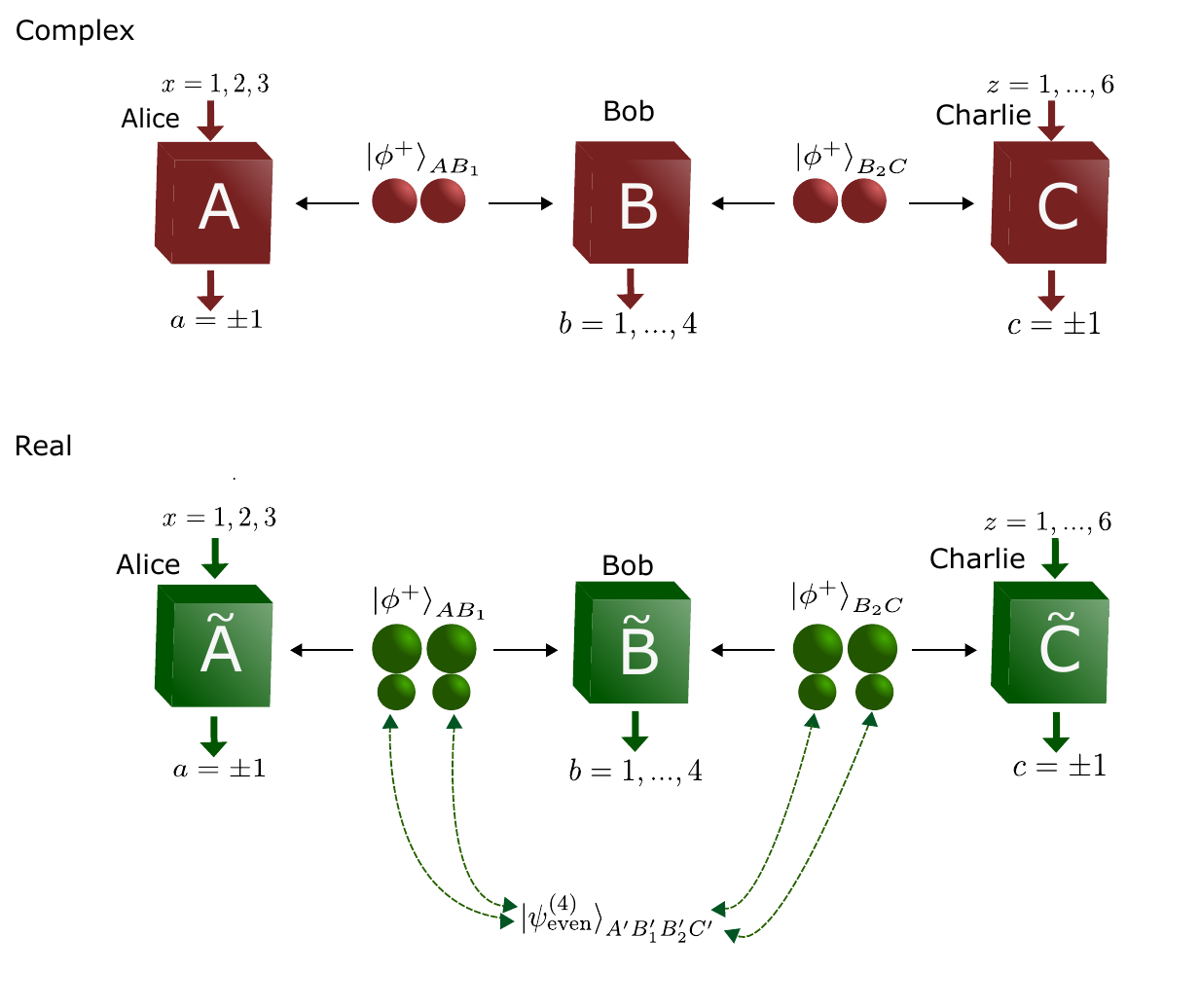}
\caption{\textbf{Adaptation of the figure by Renou et al. to the real-valued framework.} A maximally entangled state, namely $\ket{\phi^+}=\frac{1}{\sqrt{2}}(\ket{00}+\ket{11})$, is distributed to Alice and Bob and Bob and Charlie. 
Bob performs a Bell measurement, with outcomes labeled by $b$, that swaps the entanglement to Alice and Charlie, giving them a maximally entangled pair. 
In the complex case (top), Alice and Charlie perform three and six complex measurements labeled by $x$ and $z$, respectively, with two possible outcomes $\pm 1$. 
Afterwards, they gather the measurement statistics and observe a violation of a combination of CHSH inequalities. 
In the real case (bottom) we reproduce the measurement statistics by applying the mapping $\mathcal{R}$ (see Eq. \eqref{eq:mapstates}), which adds one flag state to each party (the primed systems represented by the small green spheres), to the states and operators. 
The same violation is achieved in this real version of quantum mechanics.}
\label{fig:scenario}
\end{figure} 

The authors introduce a network Bell type experiment  with an entanglement-swapping setup depicted on Figure~\ref{fig:scenario} (top part). The experiment involves three parties, Alice, Bob and Charlie. Two independent sources generate bipartite states: (i) for Alice and Bob, and (ii) for Bob and Charlie. Then Bob performs a fixed measurement with four outcomes. Alice chooses between three dichotomic measurements and Charlie chooses between six dichotomic measurements. Denote $x\in\{1,2,3\}$ and $z\in\{1,\ldots,6\}$ Alice's and Charlie's inputs (measurement choices) and $a,c\in\{-1,1\}$ their measurement outcomes. Bob's measurement outcome is denoted by $b=b_1b_2\in\{00,01,10,11\}$. This setup gives rise to a family of joint probability distributions $P(a,b,c|x,z)$ for parties' measurement outcomes for various inputs $x$ and $z$. That is, $P(a,b,c|x,z)\geq0$ and $\sum_{a,b,c}P(a,b,c|x,z)=1$ for all $x$ and $z$. Then, the following Bell quantity is constructed for this distribution.  We define
\begin{equation}
S_{xz}^b(P)=\sum_{a,c=\pm 1}a c\, P(a,b,c|x,z),	
\end{equation}
for every $b$, so that 
\begin{equation}
\begin{split}
\mathscr{T}_b(P)&=(-1)^{b_2}(S_{11}^b+S_{12}^b)+(-1)^{b_1}(S_{21}^b-S_{22}^b)\\
&+(-1)^{b_2}(S_{13}^b+S_{14}^b)-(-1)^{b_1+b_2}(S_{33}^b-S_{34}^b)\\
&+(-1)^{b_2}(S_{25}^b+S_{26}^b)-(-1)^{b_1+b_2}(S_{35}^b-S_{36}^b)
\end{split}
\label{functional}
\end{equation}
corresponds to the left-hand side of a particular Bell inequality resulting from the combination of three Clauser-Horne-Shimony-Holt (CHSH) inequalities~\cite{PhysRevA.93.040102, PhysRevA.98.042336}.
We combine all four inequalities into the term
\begin{equation}
\mathscr{T}(P)=\sum_b P(b)\mathscr{T}_b(P)
\end{equation}
with $P(b)=\sum_{a,c}P(a,b,c|x,z)$ being the probabilities of Bob's outcomes, which are independent of Alice's and Charlie's inputs.

The discussed question is which  probability distributions $P(a,b,c|x,z)$ can be reconstructed in complex and real quantum mechanics, respectively. In particular, which values of $\mathscr T(P)$ can be achieved in each model.
The interpretation within complex quantum mechanics is of the form
\begin{equation}
\label{eq:prob_distr_cqm}
P(a,b,c|x,z)=\sum_\lambda P(\lambda)\mathrm{tr}\bigl[(\rho_{AB_1}^\lambda\otimes\rho_{B_2C}^\lambda)(\Pi^A_{a|x}\otimes \Pi^B_b\otimes \Pi^C_{c|z})\bigr],
\end{equation}
for some probability distribution $P(\lambda)$ of a shared randomness (or a ``hidden variable'') $\lambda$, some quantum states (distributed by the sources) $\rho_{AB_1}^\lambda$ and $\rho_{B_2C}^\lambda$, and some projective measurement operators $\Pi^A_{a|x}$, $\Pi^B_b$ and $\Pi^C_{c|z}$ with $\sum_a \Pi^A_{a|x}=\mathds 1_A$ for all $x$, $\sum_b \Pi^B_{b}=\mathds 1_B$, and $\sum_c \Pi^C_{c|z}=\mathds 1_C$ for all $z$. 

The following experiment in the framework of the usual complex quantum mechanics is suggested, generating a family of probability distributions $\bar P(a,b,c|x,y)$. Both sources distribute two-qubit Bell states $\ket{\phi^+}$, i.e., $\bar\rho_{AB_1}=\bar\rho_{B_2C}=\ket{\phi^+}\bra{\phi^+}$.  Shared randomness is not used. Then Bob performs a Bell measurement on his two qubits (one qubit from each source). The corresponding projectors are
\begin{equation}
(\bar \Pi^B_{00},\bar\Pi^B_{01},\bar\Pi^B_{10},\bar\Pi^B_{11}) = \left(\ket{\phi^+}\bra{\phi^+},\, \ket{\psi^+}\bra{\psi^+},\, \ket{\phi^-}\bra{\phi^-},\, \ket{\psi^-}\bra{\psi^-}\right), \label{eq:bellmeasurement}
\end{equation}
where $\ket{\psi^\pm}=\frac{1}{\sqrt{2}}(\ket{10}\pm \ket{01})$ and $\ket{\phi^\pm}=\frac{1}{\sqrt{2}}(\ket{00}\pm \ket{11})$ are the four Bell states. Note that, following \cite{Renou2021}, we use a different convention for $\ket{\psi^-}$ in comparison with the standard one. For each of Bob's outcomes the marginal state that Alice and Charlie share is one of the four Bell states.

Alice's and Charlie's particles end up maximally entangled through Bob's measurement and never interacting with each other. Moreover, the protocol can be interpreted as teleportation of Bob's part of the entangled pair shared with Alice to Charlie. As shown in \cite{teleport}, by making use of the no-signaling principle, it is concluded that reliable teleportation of a $d$-dimensional state requires a classical channel of $2\log_2(d)$ bits of capacity. 
Thus, two classical bits of information (coming from Bob's measurement) allow for at most one maximally entangled qubit-pair (``e-bit'').

Alice  measures the observables (given by Hermitian operators and, hence, the corresponding projector-valued resolutions of unity: $A_x= \bar\Pi^A_{1|x}-\bar\Pi^A_{-1|x}$)
\begin{equation}
A_1 = Z, \qquad A_2 = X, \qquad A_3 = Y,
\end{equation}
and Charlies measures the observables 
\begin{eqnarray}\label{Eq-CharlyMeas}
\begin{aligned}
C_1=&\frac{X+Z}{\sqrt{2}}, &\quad C_2 =& \frac{Z-X}{\sqrt{2}}, &\quad C_3 =& \frac{Y+Z}{\sqrt{2}},\\
C_4 =& \frac{Z-Y}{\sqrt{2}},&\quad C_5 =& \frac{X+Y}{\sqrt{2}}, & C_6=&\frac{X-Y}{\sqrt{2}},
\end{aligned}
\end{eqnarray}
with the outcomes $a,c\in\{-1,1\}$, respectively, where $X$, $Y$ and $Z$ again are the Pauli matrices. Analogously, $C_x= \bar\Pi^C_{1|x}-\bar\Pi^C_{-1|x}$.

Thus, we obtain the probability distributions $\bar P(a,b,c|x,y)$, in Eq.~(\ref{eq:prob_distr_cqm}) by construction:
\begin{equation}
\label{eq:prob_distr_cqm_bar}
\bar P(a,b,c|x,z):=\mathrm{tr}\bigl[(\bar\rho_{AB_1}\otimes\bar\rho_{B_2C})(\bar\Pi^A_{a|x}\otimes \bar\Pi^B_b\otimes \bar\Pi^C_{c|z})\bigr].
\end{equation}
Here $\bar P(a,b,c|x,z)$ denotes the probability distribution which is obtained for the states and measurements defined above (Eqs.~\eqref{eq:bellmeasurement}--\eqref{Eq-CharlyMeas}).  Then, 
$\bar P(b)=\frac{1}{4}$ and $\mathscr T_b(\bar P)=6\sqrt2$ for all outcomes $b$ and, thus,
\begin{equation}
\mathscr{T}(\bar P)=6\sqrt{2}\approx8.4853.
\end{equation} 
On the other side, Renou et al. derive the upper bound $\mathscr{T}( P)\leq 7.6605$ if all operators in Eq.~(\ref{eq:prob_distr_cqm}) are real.

However, Eq.~(\ref{eq:prob_distr_cqm}) reflects the embedding of a subsystem into a composite system as the tensor product. In Section D, we defined the embedding for operators into the quotient space. Then, instead of Eq.~\eqref{eq:prob_distr_cqm}, in the real-valued formalism we have the form (see Eqs.~\eqref{eq:otimesF}, \eqref{eq:otimesFpsiphi}, and \eqref{eq:AotimesFB} for the definition of the ``flag tensor product'' $\otimes_F$)
\begin{equation}
\label{eq:Prealform}
P(a,b,c|x,z)=\sum_\lambda P(\lambda)\mathrm{tr}\bigl[(\rho_{AB_1}^\lambda\otimes_F\rho_{B_2C}^\lambda)(\Pi^A_{a|x}\otimes_F \Pi^B_b\otimes_F \Pi^C_{c|z})\bigr].       
\end{equation}

We put by definition

\begin{equation}
\label{eq:prob_distr_cqm_bar_r}
\begin{split}
\bar P_{\rm real}(a,b,c|x,z)&:=\mathrm{tr}\bigl\{[{\mathcal T}_1(\bar\rho_{AB_1})\otimes_F{\mathcal T}_1(\bar\rho_{B_2C})][\mathcal T(\bar\Pi^A_{a|x})\otimes_F \mathcal T(\bar\Pi^B_b)\otimes_F \mathcal T(\bar\Pi^C_{c|z})]\bigr\}
\\
&=
\braket{\widetilde\psi_0|\mathcal T(\bar\Pi^A_{a|x})\otimes \mathcal T(\bar\Pi^B_b)\otimes \mathcal T(\bar\Pi^C_{c|z})|\widetilde\psi_0},
\end{split}
\end{equation}
where the map $\mathcal T_1$ for the density operators was defined in Eq.~\eqref{eq:T1rho}. 
Also see the discussion in the end of Section C that $\otimes_F$ can be replaced by the usual tensor product $\otimes$ if they act on canonical representatives of the equivalence classes. Finally, 
\begin{equation}
\ket{\widetilde\psi_0}=\mathcal{R}\big(\ket{\phi^+}_{AB_1}\otimes\ket{\phi^+}_{B_2C}\big)=\ket{\phi^+}_{AB_1}\otimes\ket{\phi^+}_{B_2C}\otimes\ket{\psi_{\mathrm{even}}^{(4)}}_{A'B_1'B_2'C'},
\end{equation}
where the primed subindices refer to the flag state of the corresponding particle.
See Figure~\ref{fig:scenario} (bottom part) for an illustration of the setup. Note that the first line of Eq.~\eqref{eq:prob_distr_cqm_bar_r} clearly shows the independence of sources. The equality with the second line of Eq.~\eqref{eq:prob_distr_cqm_bar_r}, which formally is not source-factorizable in the usual sense, is due to the introduced equivalence relations. This reconciles  the independence of the sources with the apparent non-factorability of the state. See also the main text.

Of course, by the general construction, $\bar P_{\rm real}=\bar P$, but
let us explicitly calculate the Bell quantity $\mathcal T(\bar P_{\rm real})$. We have

\begin{equation}
\mathcal{T}(\bar\Pi_b) = (\bar\Pi_b^B)\otimes \mathds{1}_{B'}.
\end{equation}
Since every operator so far is real also in the complex description of the experiment, the state of the system after the measurement conditioned on Bob's result $b$ is
\begin{equation}
\begin{aligned}
\tilde\rho_{ABC}^{(b)} =&\frac{1}{P(b)} (\mathds{1}_A\otimes \bar\Pi^B_b\otimes \mathds{1}_C\otimes\mathds{1}_{A'B_1'B_2'C'})\ket{\widetilde\psi_0}\bra{\widetilde\psi_0} (\mathds{1}_A\otimes \bar \Pi^B_b\otimes \mathds{1}_C\otimes\mathds{1}_{A'B_1'B_2'C'})\\
=& \ket{\phi^+}\bra{\phi^+}_{AC} \otimes \bar \Pi^B_b\otimes \ket{\psi_{\mathrm{even}}^{(4)}}\bra{\psi_{\mathrm{even}}^{(4)}}_{A'B_1'B_2'C'}.
\end{aligned}
\end{equation}
Let us consider Bob's outcome $b=b_1b_2=00$ (the calculation for the other values of $b$ are identical). The marginal state of Alice and Charlie after Bob's measurement is obtained by tracing out subsystems $B_1,B_2,B_1',B_2'$ and reads
\begin{equation}
\label{postmeasstate}
\tilde\rho_{AC}^{(00)}=\mathrm{tr}_{B_1B_2B_1'B_2'}\big(\tilde\rho_{ABC}^{(00)}\big)=\ket{\phi^+}\bra{\phi^+}_{AC}\otimes\frac{1}{2}\left(\ket{\phi^-}\bra{\phi^-}_{A'C'}+\ket{\psi^+}\bra{\psi^+}_{A'C'}\right),
\end{equation}
since, recall that, the flag states for two parties are given by $\ket{\psi_{\mathrm{even}}^{(2)}}=\ket{\phi^-}$ and $\ket{\psi_{\mathrm{odd}}^{(2)}}=\ket{\psi^+}$. Thus,
\begin{equation}
\bar P_{\mathrm{real}}(a,b,c|x,y) = \frac{1}{4}\mathrm{tr}\left[\tilde\rho_{AC}^{(b)} (\mathds{1}_{AA'}+a\, \mathcal{T}(A_x)_{AA'})\otimes (\mathds{1}_{CC'}+ c\, \mathcal{T}(C_y)_{CC'})\right].
\end{equation}

According to the definition in Eq.~\eqref{functional} for $\mathscr{T}_b(P)$, we get 
\begin{equation}
\label{functional00}
\mathscr{T}_{00}(\bar P_{\mathrm{real}})=(S_{11}^{00}+S_{12}^{00})+(S_{21}^{00}-S_{22}^{00})
+(S_{13}^{00}+S_{14}^{00})-(S_{33}^{00}-S_{34}^{00})
+(S_{25}^{00}+S_{26}^{00})-(S_{35}^{00}-S_{36}^{00}),
\end{equation}
where
\begin{equation}
S_{11}^{00}=\mathrm{tr}\left(\mathcal{T}(A_1)_{AA'}\mathcal{T}(C_1)_{CC'}\tilde\rho_{AC}^{(00)}\right)=\mathrm{tr}\left[\left(Z_A\otimes\mathds{1}_{A'}\otimes\frac{1}{\sqrt{2}}(X_C+Z_C)\otimes\mathds{1}_{C'}\right)\tilde\rho_{AC}^{(00)}\right]=\frac{1}{\sqrt{2}}.
\end{equation}
In this case, every operator was already real in the original description. Let us give an example of a computation with operators containing complex numbers:
\begin{equation}
\begin{aligned}
S_{33}^{00}=&\mathrm{tr}\left(\mathcal{T}(A_3)_{AA'}\mathcal{T}(C_3)_{CC'}\tilde\rho_{AC}^{(00)}\right)\\
=&\mathrm{tr}\left[\left(\mathrm{Im}(Y)_A\otimes X_{A'}Z_{A'}\otimes\frac{1}{\sqrt{2}}\big(\mathrm{Im}(Y)_C\otimes X_{C'}Z_{C'}+Z_C\otimes\mathds{1}_{C'}\big)\right)\tilde\rho_{AC}^{(00)}\right]=-\frac{1}{\sqrt{2}},
\end{aligned}
\end{equation}
since, recall that, $J=XZ$, where $X$ and $Z$ are the Pauli matrices. Also note that the product $X_{A'}Z_{A'}\otimes X_{C'}Z_{C'}$ and $-I_{A'C'}$ have equal effects on vectors of the form $\ket{\widetilde\psi}\otimes_F\ket{\widetilde\phi}$, and $\rho^{(00)}_{AC}$ is a convex combination of vectors of this form.
Analogously, one can compute the expectation value for every combination in Eq. \eqref{functional00}, yielding
\begin{equation}
\begin{aligned}
S^{00}_{11}=&\frac{1}{\sqrt{2}}, &\quad S^{00}_{12}=&\frac{1}{\sqrt{2}}, &\quad  S^{00}_{21}=&\frac{1}{\sqrt{2}}, &\quad  S^{00}_{22}=&-\frac{1}{\sqrt{2}}, &\quad\\
S^{00}_{13}=&\frac{1}{\sqrt{2}}, &\quad  S^{00}_{14}=&\frac{1}{\sqrt{2}}, &\quad  S^{00}_{33}=&-\frac{1}{\sqrt{2}}, &\quad  S^{00}_{44}=&\frac{1}{\sqrt{2}}, &\quad\\
S^{00}_{25}=&\frac{1}{\sqrt{2}}, &\quad  S^{00}_{26}=&\frac{1}{\sqrt{2}}, &\quad  S^{00}_{35}=&-\frac{1}{\sqrt{2}}, &\text{ and }  S^{00}_{36}=&\frac{1}{\sqrt{2}}. &\quad
\end{aligned}
\end{equation}
We recover all expectation values predicted by complex quantum mechanics and achieve $\mathscr T_{00}(\bar P_{\mathrm{real}})=6\sqrt{2}$. Similar calculations of $\mathscr T_b(\bar P_{\rm real})$ for all other $b$ lead to the same final Bell quantity 
\begin{equation}
\mathscr{T}(\bar P_{\mathrm{real}})=6\sqrt{2}.
\end{equation}
Therefore, observing this value in the experiment does not distinguish between complex and real quantum mechanics.

\end{document}